\newtheorem{theorem}{Theorem}[section]
\newtheorem{lemma}[theorem]{Lemma}
\newtheorem{claim}{Claim}
\theoremstyle{plain}
\newenvironment{claimproof}{\begin{proof}}{\end{proof}}
\newcommand{\Zzero}{\mathbb{Z}_{\ge 0}}
\newcounter{one}
\newcommand{\one}{{\rm \roman{one}}}
\newcounter{two}
\newcommand{\two}{{\rm \roman{two}}}
\newcounter{three}
\newcounter{four}
\newcounter{five}
\newcounter{six}
\newcommand{\NP}{{$\mathsf{NP}$}\xspace}
\newcommand{\PSPACE}{{$\mathsf{PSPACE}$}\xspace}
\newcommand{\coNP}{{$\mathsf{coNP}$}\xspace}
\newcommand{\APX}{{$\mathsf{APX}$}\xspace}
\newenvironment{listing}[1]{%
        \begin{list}{*}{%
                 \settowidth{\labelwidth}{#1}%
                 \setlength{\leftmargin}{\labelwidth}%
                  \advance \leftmargin by 12pt
                   \setlength{\itemsep}{0pt}%
                   \setlength{\parsep}{0pt}%
                   \setlength{\topsep}{0pt}%
                   \setlength{\parskip}{0pt}%
}%
}{%
\end{list}}
\newcommand{\R}{\mathbb{R}}
\newcommand{\symmdiff}{\mathbin{\triangle}}
\newcommand{\opt}{\mathrm{OPT}}
\newcommand{\prob}{\textsc{Shortest Perfect Matching Reconfiguration}\xspace}
\newcommand{\TreeDiamDec}{\textsc{Min-Sum Diameter Decomposition}\xspace}
\newcommand{\ham}{\textsc{Hamiltonian Cycle Problem}\xspace}
\newcommand{\diham}{\textsc{Directed Hamiltonian Cycle Problem}\xspace}
\title{Shortest Reconfiguration of Perfect Matchings\\ via Alternating Cycles}
\date{}
\author{
 Takehiro Ito\thanks{Partially supported by JST CREST Grant Number JPMJCR1402, and JSPS KAKENHI Grant Numbers JP18H04091 and JP19K11814, Japan.}\\ Tohoku University, Japan\\  \texttt{takehiro@ecei.tohoku.ac.jp}
  \and
  Naonori Kakimura\thanks{Supported by JSPS KAKENHI Grant Numbers JP17K00028 and JP18H05291.}\\ Keio University, Japan\\  \texttt{kakimura@math.keio.ac.jp}
  \and
  Naoyuki Kamiyama\thanks{Partially supported by JST PRESTO Grant Number JPMJPR1753, Japan.}\\ Kyushu University, and JST, PRESTO, Japan\\ \texttt{kamiyama@imi.kyushu-u.ac.jp}
  \and
  Yusuke Kobayashi\thanks{Partly supported by JSPS KAKENHI Grant Numbers JP16K16010, JP17K19960, and JP18H05291, Japan.}\\ Kyoto University, Japan\\ \texttt{yusuke@kurims.kyoto-u.ac.jp}
  \and
  Yoshio Okamoto\thanks{Partially supported by JSPS KAKENHI Grant Number 15K00009 and JST CREST Grant Number JPMJCR1402, and Kayamori Foundation of Informational Science Advancement.}\\ University of Electro-Communications, and\\ RIKEN Center for Advanced Intelligence Project, Japan\\ \texttt{okamotoy@uec.ac.jp}
}
\begin{document}
 \maketitle
\begin{abstract}
  Motivated by adjacency in perfect matching polytopes,
  we study the shortest reconfiguration problem of perfect matchings via
  alternating cycles.
  Namely, we want to find a shortest sequence of perfect matchings which transforms one
  given perfect matching to another given perfect matching such that
  the symmetric difference of each pair of
  consecutive perfect matchings is a single cycle.
  The problem is equivalent to the combinatorial shortest path problem
  in perfect matching polytopes.
  We prove that the problem is {\NP}-hard even when a given graph is
  planar or bipartite, but it can be solved in polynomial time when
  the graph is outerplanar.
\end{abstract}


\section{Introduction}

\emph{Combinatorial reconfiguration} is a fundamental research subject
that sheds light on solution spaces of combinatorial (search) problems, and
connects various concepts such as optimization, counting, enumeration, 
and sampling.
In its general form, combinatorial reconfiguration is concerned with
properties of the configuration space of a combinatorial problem.
The configuration space of a combinatorial problem is often represented
as a graph, but its size is usually exponential in the instance size.
Thus, algorithmic problems on combinatorial reconfiguration are not
trivial, and require novel tools for resolution.
For recent surveys, see~\cite{DBLP:journals/algorithms/Nishimura18,DBLP:books/cu/p/Heuvel13}.

Two basic questions have been encountered in the study of combinatorial reconfiguration.
The first question asks the existence of a path between two given solutions in
the configuration space, namely the \emph{reachability} of the two solutions.
The second question asks the shortest length of a path between two given 
solutions, if it exists.
The second question is usually referred to as a 
\emph{shortest reconfiguration problem}.

\begin{figure}[tb]
  \centering
		\includegraphics[width=\linewidth]{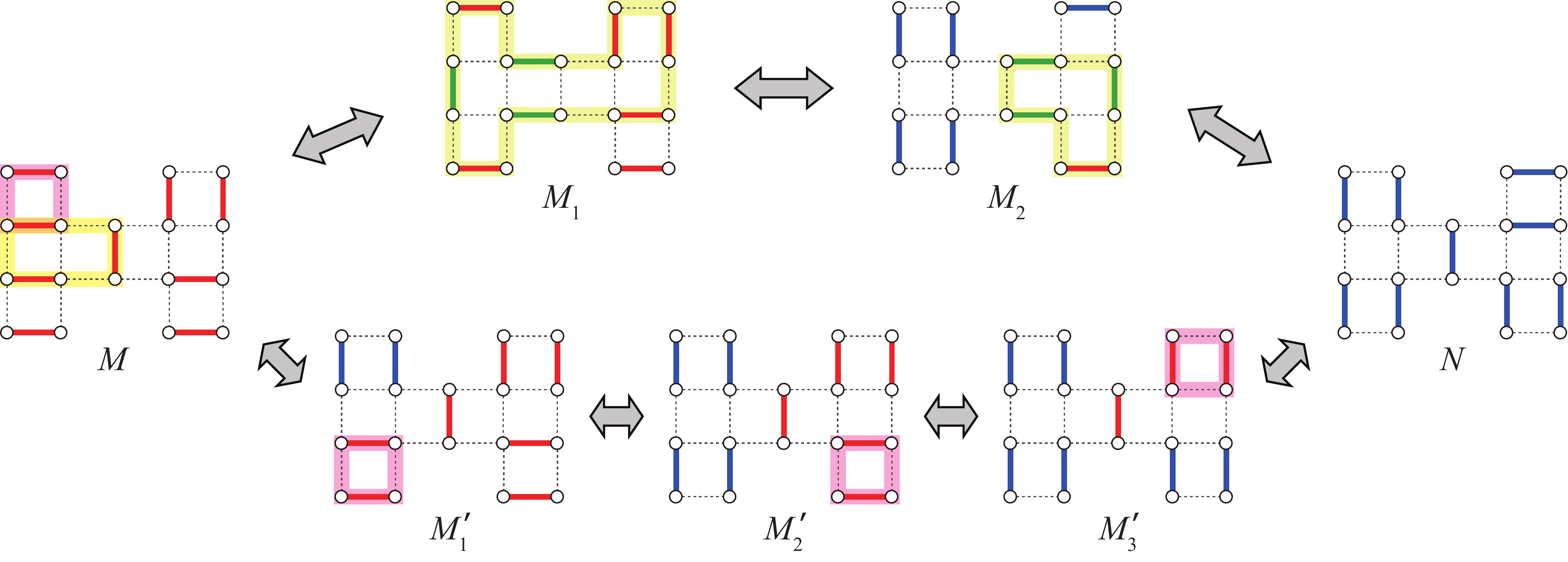}
	\caption{Two sequences of perfect matchings between $M$ and $N$ under the alternating cycle model. 
		The sequence $\langle M, M_1, M_2, N \rangle$ following the yellow alternating cycles is shortest even though it touches the edge in $M \cap N$ twice. 
		On the other hand, $\langle M, M_1^\prime, M_2^\prime, M_3^\prime, N \rangle$ following the pink alternating cycles is not shortest although it touches only the edges in $M \symmdiff N$.}
	\label{fig:example}
\end{figure}

In this paper, we focus on reconfiguration problems of matchings, namely sets of
independent edges.
There are several ways of defining the configuration space for matchings, and some of them have already been studied in the literature~\cite{DBLP:journals/tcs/ItoDHPSUU11,DBLP:journals/tcs/KaminskiMM12,DBLP:conf/sofsem/GuptaKM19,DBLP:journals/corr/abs-1812-05419,PMRarXiv}.
We will explain them in Section~\ref{subsec:closelyrelatedwork}.

We study yet another configuration space for matchings, 
which we call the \emph{alternating path/cycle} model.
The model is motivated by adjacency in matching polytopes, which we will see
soon.
In the model,
we are given an undirected and unweighted graph $G$,
and also an integer $k \geq 0$.
The vertex set of the configuration space consists of 
the matchings in $G$ of size at least $k$.
Two matchings $M$ and $N$ in $G$ are adjacent in the configuration space if and only if their symmetric
difference $M \symmdiff N := (M\cup N) \setminus (M\cap N)$ is a single path or cycle.
In particular, we are interested in the case where $k=|V(G)|/2$, namely
the reconfiguration of \emph{perfect matchings}.
In that case, the model is simplified to the \emph{alternating cycle} model
since $M \symmdiff N$ cannot have a path.
See \figurename~\ref{fig:example} as an example.

The reachability of two perfect matchings is trivial under the alternating cycle model: the answer is always yes.
This is because the symmetric difference of two perfect matchings always
consists of vertex-disjoint cycles.
Therefore, we focus on the shortest perfect matching reconfiguration 
under the alternating cycle model.

\subsection{Related Work} \label{subsec:closelyrelatedwork}
\paragraph*{Other Configuration Spaces for Matchings}
As mentioned, reconfiguration problems of matchings have already been studied under different models~\cite{DBLP:journals/tcs/ItoDHPSUU11,DBLP:journals/tcs/KaminskiMM12,DBLP:conf/sofsem/GuptaKM19,DBLP:journals/corr/abs-1812-05419,PMRarXiv}.
These models chose more elementary changes as the adjacency on the configuration space.  
Then, the situation changes drastically under such models: 
even the reachability of two matchings is not guaranteed. 

Matching reconfiguration was initiated by the work of
Ito et al.~\cite{DBLP:journals/tcs/ItoDHPSUU11}.
They proposed the \emph{token addition/removal} model of reconfiguration,
in which 
we are also given an integer $k \geq 0$, and
the vertex set of the configuration space consists of 
the matchings of size at least $k$.\footnote{Precisely, their model is defined in a slightly different way, but it is essentially the same as this definition.}
Two matchings $M$ and $N$ are adjacent if and only if they differ in 
only one edge.
Ito et al.~\cite{DBLP:journals/tcs/ItoDHPSUU11} proved that the
reachability of two given matchings can be checked in polynomial time.

Another model of reconfiguration is \emph{token jumping}, introduced by Kami\'nski et al.~\cite{DBLP:journals/tcs/KaminskiMM12}.
In the token jumping model, 
we are also given an integer $k \geq 0$, and
the vertex set of the configuration space consists of 
the matchings of size exactly $k$.
Two matchings $M$ and $N$ are adjacent if and only if they differ in 
only two edges.
Kami\'nski et al.~\cite[Theorem~1]{DBLP:journals/tcs/KaminskiMM12} proved that the token jumping model is equivalent to the token addition/removal model when $|M| = |N|$. 
Thus, using the result by Ito et al.~\cite{DBLP:journals/tcs/ItoDHPSUU11},
the reachability can be checked in polynomial time also under the token jumping model~\cite[Corollary~2]{DBLP:journals/tcs/KaminskiMM12}.

On the other hand, the shortest matching reconfiguration is known to
be hard.
Gupta et al.~\cite{DBLP:conf/sofsem/GuptaKM19} and
Bousquet et al.~\cite{DBLP:journals/corr/abs-1812-05419}
independently
proved that
the problem is \NP-hard under the token jumping model.
Then, the problem is also \NP-hard under the
token addition/removal model,
because the shortest lengths are preserved under the two models~\cite[Theorem~1]{DBLP:journals/tcs/KaminskiMM12}.

Recently, Bonamy et al.~\cite{PMRarXiv} studied the reachability of two perfect matchings under a model close to ours, namely the alternating cycle model \emph{restricted to length four}. 
In the model, two perfect matchings $M$ and $N$ are adjacent if and only if their symmetric
difference $M \symmdiff N$ is a cycle of length four.
Then, the answer to the reachability is not always yes, and Bonamy et al.~\cite{PMRarXiv} proved that the reachability problem is \PSPACE-complete under this restricted model.

\paragraph*{Relation to Matching Polytopes}
Our alternating cycle model (without any restriction of cycle length) for the perfect matching reconfiguration
is natural when we see the connection with the simplex methods for
linear optimization, or combinatorial shortest paths of the graphs of
convex polytopes.

In the combinatorial shortest path problem of a convex polytope,
we are given a convex polytope $P$, explicitly or implicitly, and
two vertices $v,w$ of $P$.
Then, we want to find a shortest sequence $u_0,u_1,\dots,u_t$ of vertices
of $P$ such that $u_0=v, u_t=w$ and $\overline{u_{i}u_{i+1}}$ forms an edge of $P$
for every $i=0,1,\dots,t-1$.
Often, we are only interested in the length of such a shortest sequence,
and we are also interested in the maximum shortest path length among 
all pairs of vertices, which is known as the combinatorial diameter of 
the polytope $P$.
The combinatorial diameter of a polytope has attracted much attention
in the optimization community from the motivation of better understanding
of simplex methods.
Simplex methods for linear optimization start at a vertex of the feasible
region, follow edges, and arrive at an optimal vertex.
Therefore, the combinatorial diameter dictates the best-case behavior of
such methods.
The famous Hirsch conjecture states that every $d$-dimensional convex 
polytope with $n$ facets has the combinatorial diameter at most $n-d$.
This has been disproved by Santos~\cite{santos12}, 
and the current best upper bound of $(n-d)^{\log_2 O(d/\log d)}$ for the 
combinatorial diameter was given by Sukegawa~\cite{sukegawa}.
On the other hand, for the 0/1-polytopes
 (i.e., polytopes in which the coordinates 
of all vertices belong to $\{0,1\}$), the Hirsch conjecture holds~\cite{DBLP:journals/mp/Naddef89}.

The shortest perfect matching reconfiguration under the alternating cycle model
can be seen as the
combinatorial shortest path problem of a perfect matching polytope.
The \emph{perfect matching polytope} of a graph $G$ is defined as follows.
The polytope lives in $\R^{E(G)}$, namely each coordinate corresponds to
an edge of $G$.
Each vertex $v$ of the polytope corresponds to a perfect matching $M$ of $G$ as
$v_e=1$ if $e \in M$ and $v_e=0$ if $e \not\in M$.
The polytope is defined as the convex hull of those vertices.
It is known that two vertices $u,v$ of the perfect matching polytope form
an edge if and only if the corresponding perfect matchings $M,N$ have the
property that $M \symmdiff N$ contains only one cycle~\cite{CHVATAL1975138}.
This means that the graph of the perfect matching polytope is exactly the
configuration space for perfect matchings under the
alternating cycle model.

\paragraph*{Further Related Work}

As mentioned before, the matching reconfiguration has been studied by
several authors~\cite{DBLP:journals/tcs/ItoDHPSUU11,DBLP:journals/tcs/KaminskiMM12,DBLP:conf/sofsem/GuptaKM19,DBLP:journals/corr/abs-1812-05419,PMRarXiv}.
Extension to $b$-matchings has been considered, too~\cite{DBLP:conf/mfcs/Muhlenthaler15,DBLP:journals/jco/ItoKKKO19}.

Shortest reconfiguration has attracted considerable attention.
Starting from an old work on the $15$-puzzle~\cite{DBLP:conf/aaai/RatnerW86},
we see the work on pancake sorting~\cite{DBLP:journals/jcss/BulteauFR15},
triangulations of point sets~\cite{DBLP:journals/comgeo/LubiwP15,DBLP:journals/comgeo/Pilz14} and
simple polygons~\cite{DBLP:journals/dcg/AichholzerMP15} under flip distances,
and also independent set reconfigurations~\cite{DBLP:conf/walcom/YamadaU16},
satisfiability reconfiguration~\cite{DBLP:journals/siamdm/MouawadNPR17},
coloring reconfiguration~\cite{DBLP:journals/algorithmica/0001KKPP16},
token swapping problems~\cite{DBLP:journals/tcs/YamanakaDIKKOSS15,DBLP:conf/esa/MiltzowNORTU16,DBLP:journals/tcs/YamanakaHKKOSUU18,DBLP:journals/algorithmica/BonnetMR18,DBLP:journals/jgaa/YamanakaDHKNOSS19,DBLP:journals/jgaa/KawaharaSY19}.
A tantalizing open problem is to determine the complexity of
computing the rotation distance of two rooted binary trees
(or equivalently the flip distance of two triangulations
of a convex polygon, or the combinatorial shortest path of an
associahedron).

The computational aspect of the combinatorial shortest path problem on convex
polytopes is not well investigated.
It is known that the combinatorial diameter is hard to determine~\cite{DBLP:journals/cc/FriezeT94} even for fractional matching polytopes \cite{DBLP:conf/focs/Sanita18}.
In the literature, we find many papers on the adjacency of convex polytopes
arising from combinatorial optimization problems~\cite{DBLP:journals/cor/GeistR92,matsuiMETR9412,DBLP:journals/dam/AlfakihM98,DBLP:journals/ejc/Fiorini03}.
Among others, Papadimitriou~\cite{DBLP:journals/mp/Papadimitriou78} 
proved that determining whether two given vertices are adjacent in a traveling salesman polytope is \coNP-complete.
This implies that computing the combinatorial shortest path between two vertices of a traveling salesman polytope is \NP-hard.
However, to the best of the authors' knowledge, all known combinatorial polytopes with such adjacency hardness stem from \NP-hard combinatorial optimization problems and the associated polytopes have exponentially many facets.
We also point out the work on a randomized algorithm to compute a combinatorial ``short'' path~\cite{DBLP:conf/icalp/BrunschR13}.

\subsection{Our Contribution}
To the best of the authors' knowledge, known results under different models do not have direct relations to our alternating cycle model, 
because their configuration spaces are different.
In this paper, we thus investigate the polynomial-time solvability of the shortest perfect matching reconfiguration under the alternating cycle model.  
The results of our paper are two-fold.
\begin{enumerate}
\item The shortest perfect matching reconfiguration under the alternating cycle model can be solved in polynomial time if the input graph is outerplanar.
\item The shortest perfect matching reconfiguration under the alternating cycle model is \NP-hard even when the input graph is planar or bipartite.
\end{enumerate}
Since outerplanar graphs form a natural and fundamental subclass of 
planar graphs, our results exhibit a tractability border among planar
graphs.

The hardness result for bipartite graphs implies that the computation
of a combinatorial shortest path in a convex polytope is \NP-hard even
when an inequality description is explicitly given.  This is because a
polynomial-size inequality description of the perfect matching
polytope can be explicitly written down from a given bipartite graph.

We point out that the hardness results have been independently obtained
by Aichholzer et al.~\cite{DBLP:journals/corr/abs-1902-06103}.
Indeed, they proved the hardness for planar bipartite graphs
(i.e., an input graph is planar \emph{and} bipartite).

\paragraph*{Technical Key Points}
Compared to recent algorithmic developments on reachability problems, only a few polynomial-time solvable cases are known for shortest reconfiguration problems.  
We now explain two technical key points, especially for algorithmic results on shortest reconfiguration problems. 

The first point is the symmetric difference of two given solutions.
Under several known models (not only for matchings) that employ elementary changes as the adjacency on the configuration space, the symmetric difference gives a (good) lower bound on the shortest reconfiguration. 
This is because any reconfiguration sequence (i.e., a path in the configuration space) between two given solutions must touch all elements in their symmetric difference at least once.  
For example, in \figurename~\ref{fig:example}, the symmetric difference of two perfect matchings $M$ and $N$ consists of $16$ edges and hence it gives the lower bound of $16/4 = 4$ under the alternating cycle model restricted to length $4$~\cite{PMRarXiv}. 
In such a case, if we can find a reconfiguration sequence touching only the elements in the symmetric difference (e.g., the sequence $\langle M, M_1^\prime, M_2^\prime, M_3^\prime, N \rangle$ in \figurename~\ref{fig:example}), then it is automatically the shortest under that model. 
However, this useful property does not hold under our alternating cycle model, because the length of an alternating cycle for reconfiguration is not fixed. 

The second point is the characterization of \emph{unhappy moves} that touch elements contained commonly in two given solutions. 
For example, the shortest reconfiguration sequence $\langle M, M_1, M_2, N \rangle$ in \figurename~\ref{fig:example} has an unhappy move, since it touches the edge in $M \cap N$ twice. 
In general, analyzing a shortest reconfiguration becomes much more difficult if such unhappy moves are required.
A well-known example is the (generalized) $15$-puzzle~\cite{DBLP:conf/aaai/RatnerW86} in which the reachability can be determined in polynomial time, while the shortest reconfiguration is \NP-hard. 
As illustrated in \figurename~\ref{fig:example}, the shortest perfect matching reconfiguration requires unhappy moves even for outerplanar graphs, and hence we need to characterize the unhappy moves to develop a polynomial-time algorithm.

\section{Problem Definition}

  In this paper, a graph always refers to an undirected graph that might have parallel edges and does not have loops.
	For a graph $G$, we denote by $V(G)$ and $E(G)$ the vertex set and edge set of $G$, respectively. 
	An edge subset $M\subseteq E$ is called a {\em matching} in $G$ if no two edges in $M$ share the end vertices.
	A matching $M$ is \emph{perfect} if $|M|=|V(G)|/2$.

  A graph is \emph{planar} if it can be drawn on the plane without edge crossing.
  Such a drawing is called a \emph{plane} drawing of the planar graph.
  A \emph{face} of a plane drawing is a maximal region of the plane that contains no point used in the drawing.
  There is a unique unbounded face, which is called the \emph{outer face}.
  A planar graph is \emph{outerplanar} if it has an \emph{outerplane} drawing, i.e., a plane drawing in which all vertices are incident to the outer face.

	For a matching $M$ in a graph $G$, a cycle $C$ in $G$ is called \emph{$M$-alternating} if 
  edges in $M$ and $E(G) \setminus M$ alternate in $C$.
  We identify a cycle with its edge set to simplify the notation. 
	We say that two perfect matchings $M$ and $N$ are \emph{reachable} (under the alternating cycle model) if there exists a sequence $\langle M_0, M_1, \ldots, M_t \rangle$ of perfect matchings in $G$ such that 
  \begin{enumerate}
	\item[(\one)] $M_0 = M$ and $M_t = N$;
	\item[(\two)] $M_{i}=M_{i-1} \symmdiff C_{i}$ for some $M_{i-1}$-alternating cycle $C_{i}$ for each $i = 1,\ldots, t$.
  \end{enumerate}
	Such a sequence is called a \emph{reconfiguration sequence} between $M$ and $N$, and its \emph{length} is defined as $t$.

	For two perfect matchings $M$ and $N$, the subgraph $M\symmdiff N$ consists of disjoint $M$-alternating cycles $C_1, \dots, C_t$.
	Thus it is clear that $M$ and $N$ are always reachable for any two perfect matchings $M$ and $N$ by setting $M_i=M_{i-1}\symmdiff C_i$ for $i=1,\dots,t$.
	In this paper, we are interested in finding a \emph{shortest} reconfiguration sequence of perfect matchings.
	That is, the problem is defined as follows:
	\begin{center}
		\parbox{0.95\hsize}{
		\begin{listing}{{\bf Input:}}
      \item[{\prob}]
			\item[{\bf Input:}] A graph $G$ and two perfect matchings $M$ and $N$ in $G$
			\item[{\bf Find:}] A shortest reconfiguration sequence between $M$ and $N$.
		\end{listing}}
	\end{center}
	We denote by a tuple $I=(G, M, N)$ an instance of \prob.
	Also, we denote by $\opt (I)$ the length of a shortest reconfiguration sequence of an instance $I$.
	We note that it may happen that $\opt (I)$ is much shorter than the number of disjoint $M$-alternating cycles in $M\symmdiff N$~(see Figure~\ref{fig:example}).


\section{Polynomial-Time Algorithm for Outerplanar Graphs}
\label{sec:outerplanar}

In this section, we prove that there exists a polynomial-time algorithm for {\prob} on an outerplanar graph, as follows.
\begin{theorem}\label{thm:outerplanar}
\prob on outerplanar graphs $G$ can be solved in $O(|V(G)|^5)$ time. 
\end{theorem}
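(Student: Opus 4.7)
The plan is to exploit the tree-like structure of outerplanar graphs and carry out a dynamic program on the tree of bounded faces.

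First, I would reduce to the case where $G$ is $2$-vertex-connected outerplanar. A bridge must lie in every perfect matching if it lies in one, so we may contract such edges; cut vertices split the graph into blocks that can be reconfigured largely independently (with appropriate bookkeeping of how the cut vertex is matched). In the $2$-vertex-connected case, fix an outerplane drawing and let the bounded faces $F_1, \ldots, F_r$ form the weak dual $T$, which is a tree where $F_i, F_j$ are adjacent iff they share an edge of $G$. Each inner face is itself a cycle, and every edge of $G$ is either a boundary edge of the outer face or a chord shared by exactly two inner faces.

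The second ingredient is a structural characterization of alternating cycles. In an outerplanar graph, every cycle $C$ can be written uniquely as $\bigoplus_{F \in S} F$ for some connected subtree $S \subseteq T$; chords internal to $S$ cancel in the symmetric difference, and the remaining edges form the boundary $C$. Thus a single reconfiguration step corresponds to ``activating'' a connected subtree of $T$. I would then characterize, for a given matching $M$, exactly which connected subtrees $S$ yield an $M$-alternating boundary $C$; concretely, this translates into local conditions on how $M$ partitions the chord edges inside $S$ and the outer-boundary edges that appear in $C$.

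The algorithm is a bottom-up dynamic program on $T$ rooted at an arbitrary face. For each face $F$, let $G_F$ be the subgraph consisting of the faces in the subtree rooted at $F$. The DP computes, for each subtree, the minimum number of moves needed to reconfigure $M$ restricted to $G_F$ into $N$ restricted to $G_F$, parameterized by a small piece of boundary information along the edge joining $F$ to its parent face: which moves are ``open'' at that edge (i.e., will be closed inside the parent subtree), together with the status of that edge in the intermediate matching. The recursion combines the DP values of children at $F$, allowing a single move to traverse several child subtrees through $F$ itself.

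The main obstacle is defining the DP state so that it simultaneously (i) captures multi-face moves, which can run across arbitrarily many faces via chains of face-boundary crossings, and (ii) correctly handles \emph{unhappy moves} that touch edges of $M \cap N$, as illustrated in Figure~\ref{fig:example}. The key structural claim I would need to prove is that there is always an optimal reconfiguration sequence in which every move, viewed as a subtree of $T$, interacts with the parent edge of any rooted subtree through a polynomially-bounded ``interface''; this would let the DP boundary information stay polynomial. Once this structural claim is in place, the running time is governed by processing each of $O(n)$ faces, each with $O(n)$ boundary states, each requiring an $O(n^3)$ merging step over children and over the number of open moves to be inherited, for a total of $O(n^5)$, matching the stated bound.
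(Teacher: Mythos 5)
There is a genuine gap. You correctly identify the weak-dual tree and the fact that every cycle of a $2$-connected outerplanar graph corresponds to a connected set of bounded faces (equivalently, a cut around a subtree of the dual), and the paper uses exactly this correspondence. But the heart of your argument is deferred to an unproven ``key structural claim'' about polynomially-bounded interfaces, and the DP state is never actually defined. The difficulty this hides is precisely the hard part of the problem: under the alternating-cycle model a single move can change $M \symmdiff N$ arbitrarily much, so there is no obvious local quantity that certifies optimality, and a DP whose states record ``which moves are open at the parent edge'' is not even well-posed as stated, because a move must be alternating with respect to the \emph{intermediate} matching at the time it is applied, which depends on the temporal order of moves in other parts of the tree. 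Your proposal gives no mechanism for a lower bound on the number of moves, and without one the claimed DP cannot be shown correct.

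The paper resolves this with ideas your proposal lacks. It equips the dual tree with an edge-length function $\ell$ (chords get length $|M\cap\{e\}|+|N\cap\{e\}|$, pendant edges get $0/1$ according to whether the outer-boundary edges of that face lie in $M\symmdiff N$) and uses the tree's \emph{diameter} as a potential: since a move is a cut around a subtree, it meets any tree path in at most two edges, so each move decreases the diameter by at most $2$; a constructive argument (choosing a center $r^\ast$ and a minimal subtree whose boundary cut uses only $M$-edges and pendant edges) shows a decrease of exactly $2$ is always achievable when no chord lies in $M\cap N$, giving $\opt = \mathrm{gap}/2$ in that case. Unhappy moves are then handled not inside a reconfiguration DP but by guessing the set $F$ of chords in $M\cap N$ that are never touched, proving $\opt(I)=\tfrac12\min_{F}\sum_H \mathrm{gap}(I_H)$ over the pieces obtained by splitting along $F$, and computing this minimum as a clean combinatorial problem on the tree (minimize the sum of diameters of the components of $T-F$), which is solved by a DP with states $(x,y)$ = (eccentricity of the root in its component, diameter of that component) in $O(|V(T)|L^4)=O(|V(G)|^5)$ time. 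Your $O(n^5)$ count is reverse-engineered from the target bound rather than derived from a specified state space, so as written the proposal does not constitute a proof.
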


We give such an algorithm in this section. 
Let $I=(G, M, N)$ be an instance of the problem such that $G=(V, E)$ is an outerplanar graph.
We first observe that it suffices to consider the case when $G$ is 2-connected. 

\begin{lemma}
  \label{lem:2conn}
  Let $I=(G,M,N)$ be an instance of \prob, and 
  $G_1, \dots, G_p$ be the $2$-connected components of $G$.
  Furthermore, for every $i=1, \dots, p$, let
  $I_i = (G_i, M \cap E(G_i), N \cap E(G_i))$ be an instance of \prob.
  Then, $\opt (I)=\sum_{i=1}^p \opt (I_i)$.
\end{lemma}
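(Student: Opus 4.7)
The plan is to establish the equality by proving both inequalities, leveraging the fundamental structural fact that every cycle of $G$ lies entirely within one of the $2$-connected components $G_1, \dots, G_p$. Consequently, any $M$-alternating cycle used in any reconfiguration step is supported on the edges of exactly one $G_i$, and the proof essentially reduces to formalizing a ``block-by-block'' decomposition of reconfiguration sequences.

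For the upper bound $\opt(I) \le \sum_{i=1}^p \opt(I_i)$, I would take shortest reconfiguration sequences $\sigma_1, \dots, \sigma_p$ for $I_1, \dots, I_p$ and concatenate them in an arbitrary order. At each step belonging to $\sigma_i$ we apply a cycle $C \subseteq E(G_i)$ to the current global matching; since $C$ involves only edges of $E(G_i)$ and the global matching restricted to $E(G_i)$ coincides with the current state of $\sigma_i$, the cycle remains alternating with respect to the global matching. Applying $C$ does not touch any edge outside $E(G_i)$, so the result is a perfect matching of $G$ and the concatenation is a valid reconfiguration sequence from $M$ to $N$ of length exactly $\sum_{i=1}^p \opt(I_i)$.

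For the lower bound $\opt(I) \ge \sum_{i=1}^p \opt(I_i)$, I would start from an arbitrary reconfiguration sequence $\langle M_0, \dots, M_t \rangle$ for $I$ with associated alternating cycles $C_1, \dots, C_t$, and classify each step by the unique block $G_{i_k}$ containing $C_k$. For each $i$, extracting the subsequence of indices with $i_k = i$ and restricting the matchings to $E(G_i)$ should yield a reconfiguration sequence $\sigma_i$ from $M \cap E(G_i)$ to $N \cap E(G_i)$ in $G_i$. Letting $t_i$ denote its length, we obtain $t_i \ge \opt(I_i)$, and since cycles from distinct blocks use disjoint edge sets we have $t = \sum_{i=1}^p t_i$, yielding the desired inequality.

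The main obstacle is justifying that the projected sequences $\sigma_i$ in the lower bound are actually valid reconfiguration sequences in $G_i$: each intermediate matching restricted to $E(G_i)$ must remain a perfect matching of $G_i$, and each restricted cycle $C_k$ with $i_k = i$ must still be alternating with respect to this restricted matching. The decisive observation is that any $C_{k'}$ with $i_{k'} \neq i$ leaves $M_{k'-1} \cap E(G_i)$ unchanged, so $M_k \cap E(G_i)$ evolves only at steps $k$ with $i_k = i$, and at such a step it evolves by the symmetric difference with $C_k$. Given that $M \cap E(G_i)$ is a perfect matching of $G_i$ (which is the hypothesis making $I_i$ a valid instance of \prob), a short induction on $k$ then certifies that every intermediate restricted matching is a perfect matching of $G_i$ and that every restricted cycle is alternating, completing the proof.
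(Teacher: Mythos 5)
Your proposal is correct and follows essentially the same route as the paper: the paper's (very terse) proof rests on exactly the same observation that every alternating cycle lies inside a single $2$-connected component, so that reconfiguration sequences decompose block by block, and your two inequalities merely spell out the concatenation and projection arguments that the paper leaves implicit. Nothing further is needed.
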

\begin{proof}
	Let $G_1, \dots, G_p$ be 2-connected components in $G$.
	Then, since any $M'$-alternating cycle is contained in some $G_i$ for a perfect matching $M'$ of $G$, it suffices to solve the problem for each $G_i$.
	Specifically, it holds that $\opt (I)=\sum_{i=1}^p \opt (I_i)$, where $I_i = (G_i, M\cap E(G_i), N\cap E(G_i))$.
\end{proof}

Since the 2-connected components of a graph can be found in linear time,
the reduction to 2-connected outerplanar graphs can be done in linear time, too.

We fix an outerplane drawing of a given $2$-connected outerplanar graph $G$, and
identify $G$ with the drawing for the sake of convenience.
We denote by $C_{\rm out}$ the outer face boundary.
Then $C_{\rm out}$ is a simple cycle since $G$ is $2$-connected.
We denote the set of the \emph{inner edges of $G$} by $E_{\rm in}=E\setminus C_{\rm out}$.
In other words, $E_{\rm in}$ is the set of chords of $C_{\rm out}$.

\subsection{Technical Highlight}
As mentioned in Introduction, there are two technical key points to develop a polynomial-time algorithm for \prob:
a lower bound on the length of a shortest reconfiguration sequence, and 
the characterization of unhappy moves. 
We here explain our ideas roughly, and will give detailed descriptions in the next subsections. 

Since $G$ is planar, we can define its ``dual-like'' graph $G^\ast$.
Then, $G^\ast$ forms a tree since $G$ is outerplanar and $2$-connected.   
(The definition of $G^\ast$ will be given in Section~\ref{subsec:outer_preliminary}, and an example is given in \figurename~\ref{fig:outerplanar_example2}.)
We make a correspondence between an edge in $G^\ast$ and a set of edges in $G$. 
Then, we will define the length $\ell (e^\ast)$ of each edge $e^\ast$ in $G^\ast$ so that it represents the ``gap'' between $M$ and $N$ when we are restricted to the edges in the corresponding set of $e^\ast$. 
It is important to notice that any cycle $C$ in $G$ corresponds to a subtree of $G^\ast$, and vice versa. 
Indeed, we focus on a cut $C^\ast$ of $G^\ast$ clipping the subtree from $G^\ast$, that is, the set of edges in $G^\ast$ leaving the subtree. 
If we apply an $M$-alternating cycle $C$ to a perfect matching $M$ of $G$, then it changes lengths $\ell (e^\ast)$ of the edges $e^\ast$ in the corresponding cut $C^\ast$. 

For our algorithm, we need a (good) lower bound for the length of a shortest reconfiguration sequence between two given perfect matchings $M$ and $N$.
Recall that $|M \symmdiff N|$ does not give a good lower bound under the alternating cycle model.
This is because we can take a cycle of an arbitrary (non-fixed) length, and hence $|M \symmdiff N|$ can decrease drastically by only a single alternating cycle. 
Furthermore, no matter how we define the length $\ell (e^\ast)$ of each edge $e^\ast$ in $G^\ast$, the total length of \emph{all} edges in $G^\ast$ does not give a good lower bound.
This is because a cycle $C$ of non-fixed length in $G$ may correspond to a cut $C^\ast$ having many edges in $G^\ast$, and hence it can change the total length drastically. 
Our key idea is to focus on the total length of each \emph{path} in $G^\ast$, that is, we take the \emph{diameter} of $G^\ast$ (with respect to length $\ell$) as a lower bound. 
Then, because $G^\ast$ is a tree, any path in $G^\ast$ can contain at most two edges from the corresponding cut $C^\ast$. 
Therefore, regardless of the cycle length,  the diameter of $G^\ast$ can be changed by only these two edges.
By carefully setting the length $\ell(e^\ast)$ as in
\eqref{eq:ell}, we will prove that the diameter of $G^\ast$
is not only a lower bound, but 
indeed gives the shortest length under the assumption that
$E_{\rm in} \cap M \cap N$ is empty. Therefore, the real difficulty
arises when $E_{\rm in} \cap M \cap N$ is not empty.

In the latter case, we will characterize the unhappy moves. 
Assume that we know the set $F \subseteq E_{\rm in} \cap M \cap N$ of
chords that are \emph{not} touched in a shortest reconfiguration sequence between $M$ and $N$; 
in other words, \emph{all} chords in $(E_{\rm in} \cap M \cap N) \setminus F$ must be touched for unhappy moves in that sequence. 
Then, we subdivide a given outerplanar graph $G$ into subgraphs $G_1, \ldots, G_{|F|+1}$ along the chords in $F$. 
Notice that each edge in $F$ appears on the outer face boundaries in two of these subgraphs.
Furthermore, each chord $e$ in these subgraphs satisfies $e \in (E_{\rm in} \cap M \cap N) \setminus F$ if $e \in M \cap N$.
Therefore, \emph{all} chords in these subgraphs are touched for unhappy moves as long as they are in $M\cap N$. 
Under this assumption, we will prove that the diameter of $G_i^*$ gives the shortest length of a reconfiguration sequence between $M \cap E(G_i)$ and $N \cap E(G_i)$.
Thus, we can solve the problem in polynomial time if we know $F$ which yields a shortest reconfiguration sequence between $M$ and $N$.  
Finally, to find such a set $F$ of chords, we construct a polynomial-time algorithm which employs a dynamic programming method along the tree $G^\ast$.

\subsection{Preliminaries: Constructing a Dual Graph} \label{subsec:outer_preliminary}
Let $I=(G, M, N)$ be an instance of \prob such that $G$ is a $2$-connected outerplanar graph.
Since $G$ is planar, we can define the \emph{dual} of $G$.
In fact, we here construct a graph $G^\ast$ obtained from the dual 
by applying a slight modification as follows. 
The construction is illustrated in \figurename~\ref{fig:outerplanar_example2}.
Let $V^\ast$ be the set of faces~(without the outer face) of $G$.
For a face $v^\ast\in V^\ast$, let $E_{v^\ast}$ be the set of edges around $v^\ast$.
We denote the set of faces touching the outer face by $U^\ast$, i.e., $U^\ast=\{v^\ast\in V^\ast\mid E_{v^\ast}\cap C_{\rm out}\neq \emptyset\}$.
We make a copy of $U^\ast$, denoted by $\tilde{U}^\ast$.
We set the vertex set of $G^\ast$ to be $V^\ast\cup \tilde{U}^\ast$.
For $v^\ast, w^\ast$ in $V^\ast$, an edge $v^\ast w^\ast$ in $G^\ast$ exists if and only if the faces $v^\ast$ and $w^\ast$ share an edge in $E_{\rm in}$, i.e., $|E_{v^\ast}\cap E_{w^\ast}|=1$.
Also $G^\ast$ has an edge between $u^\ast$ and $\tilde{u}^\ast$ for every $u^\ast\in U^\ast$.
Thus the edge set of $G^\ast$ is given by
\[
E(G^\ast) = \{v^\ast w^\ast \mid v^\ast, w^\ast \in V^\ast,\ |E_{v^\ast}\cap E_{w^\ast}|=1 \}\cup \{u^\ast \tilde{u}^\ast \mid u^\ast\in U^\ast\}.
\]
The first part is denoted by $E^\ast_{\rm in}$, and the second part is denoted by $\tilde{E}^\ast$.
We observe that $G^\ast$ is a tree, since $G$ is $2$-connected and outerplanar.
A face of $G$ that touches only one face~(other than the outer face) is called a \textit{leaf in $G^\ast - \tilde{U}^\ast$}.
We note that there is a one-to-one correspondence between edges in $E_{\rm in}$ of $G$ and $E^\ast_{\rm in}$ of $G^\ast$.
For an edge subset $F\subseteq E_{\rm in}$, $F^\ast$ denotes the corresponding edge subset in $G^\ast$, that is, $F^\ast = \{e^\ast \in E^\ast_{\rm in}\mid e\in F\}$.
Conversely, for an edge subset $F^\ast\subseteq E (G^\ast)$, $F$ denotes the corresponding edge subset in $E_{\rm in}$, that is, $F = \{e \in E_{\rm in}\mid e^\ast\in F^\ast\cap E^\ast_{\rm in}\}$.
We extend this correspondence to $\tilde{E}^\ast$, that is, $u^\ast \tilde{u}^\ast \in \tilde{E}^\ast$ corresponds to the edge set $E_{u^\ast} \cap C_{\rm out}$ for $u^\ast \in U^\ast$, and vice versa.

\begin{figure}[t]
  \centering
  \scalebox{0.8}{\includegraphics{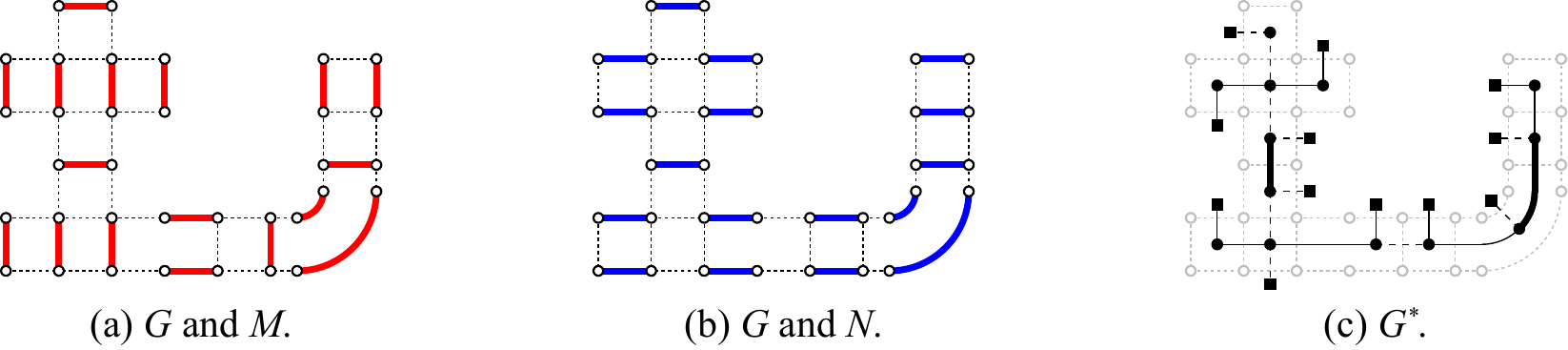}}
  \caption{The construction of $G^\ast$ and the length function $\ell$.
  In (c), the edge lengths are depicted by different styles:
  thick solid lines represent edges of length two,
  thin solid lines represent edges of length one, and
  dotted lines represent edges of length zero.}
  \label{fig:outerplanar_example2}
\end{figure}

It follows from the duality that there is a relationship between a cut in $G^\ast$ and a cycle in $G$. 
Suppose that we are given a cycle $C$ $(\neq C_{\rm out})$ in $G$.
Then, since $G$ is outerplanar, the cycle $C$ surrounds a set $X^\ast$ of faces such that $X^\ast$ does not have the outer face.
The set $X^\ast$ induces a connected graph (subtree) in $G^\ast$, and the set of edges leaving from $X^\ast$ yields a cut $C^\ast=\{e^\ast = v^\ast w^\ast \mid v^\ast\in X^\ast, w^\ast\in V(G^\ast)\setminus X^\ast\}$.
Conversely, let $X^\ast\subseteq V^\ast$ be a vertex subset of $G^\ast$ such that the subgraph induced by $X^\ast$ is connected.
Then the set of edges leaving from $X^\ast$ yields a cut $C^\ast$ in $G^\ast$, 
which corresponds to a cycle in $G$.

We classify faces in $U^\ast$ into two groups.
For a face $u^\ast$ in $U^\ast$, the edge set $E_{u^\ast}\cap C_{\rm out}$ forms a family $\mathcal{P}_{u^\ast}$ of disjoint paths.
Since $M$ and $N$ are perfect matchings, each path $P$ in $\mathcal{P}_{u^\ast}$ is both $M$-alternating and $N$-alternating. 
In addition, $P$ satisfies either 
\begin{enumerate}
\item[(\one)] $E(P) \subseteq M\symmdiff N$, or 
\item[(\two)] $(M \symmdiff N) \cap E(P) = \emptyset$.
\end{enumerate}
Furthermore, we observe that either (\one) holds for \emph{every} path $P$ in $\mathcal{P}_{u^\ast}$, or (\two) holds for \emph{every} path $P$ in $\mathcal{P}_{u^\ast}$.
Indeed, since $M\symmdiff N$ consists of disjoint cycles, if some path $P$ in $\mathcal{P}_{u^\ast}$ satisfies~(\one), then $P$ is included in 
a cycle $C$ in $M \symmdiff N$ that separates $u^\ast$ from the outer face.
Since the other paths in $\mathcal{P}_{u^\ast}$ touch the outer face, they are on $C$.
Thus every path satisfies~(\one), which shows the observation.
We divide $U^\ast$ into two groups $U^\ast_1$ and $U^\ast_2$ where each face in $U^\ast_1$ satisfies (\one) for every path, while each face in $U^\ast_2$ satisfies (\two) for every path.

For an edge $e^\ast$ in $E(G^\ast)$, we define the length $\ell (e^\ast)$ to be
\begin{equation}\label{eq:ell}
\ell (e^\ast) = 
\begin{cases}
|M\cap \{e\}|+|N\cap \{e\}| & \text{ if } e^\ast\in E^\ast_{\rm in}; \\
1 & \text{ if $e^\ast \in \tilde{E}^\ast$ is from $U^\ast_1$}; \\
0 & \text{ if $e^\ast \in \tilde{E}^\ast$ is from $U^\ast_2$}. 
\end{cases}
\end{equation}
See \figurename~\ref{fig:outerplanar_example2} for an example.
Let $\ell (u^\ast, v^\ast)$ be the length of the (unique) path from $u^\ast$ to $v^\ast$ in $G^\ast$.
We define the {\em gap} between $M$ and $N$ in the graph $G$ as the diameter of $G^\ast$, that is, we define 
\[
  {\rm gap}(I) = \max \{ \ell(u^\ast, v^\ast)\mid u^\ast, v^\ast\in V(G^\ast)\}.
\]
This value is simply denoted by ${\rm gap}(M, N)$ if $G$ is clear from the context.

\subsection{Characterization for the Disjoint Case}

Let $I=(G, M, N)$ be an instance of \prob such that $G$ is a $2$-connected outerplanar graph.
In this subsection, we show that if $E_{\rm in}\cap M\cap N$ is empty, 
we can characterize the optimal value with ${\rm gap}(I)$, which leads to a simple polynomial-time algorithm for this case.
We note that if $E_{\rm in}\cap M\cap N$ is empty, 
then no edge in $E_{\rm in}$ belongs to both $M$ and $N$, and hence $\ell (e^\ast)$ can only take $0$ or $1$;
in addition, ${\rm gap}(M,N) = 0$ if $M=N$.

\begin{lemma}\label{lem:diam_parity}
It holds that ${\rm gap}(M, N)$ is even.
\end{lemma}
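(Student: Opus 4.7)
My plan is to translate the length function $\ell$ on $G^\ast$ into crossing counts with the cuts associated to the alternating cycles of $M\symmdiff N$, and then exploit the tree structure of $G^\ast$ together with a description of its leaves.

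First, decompose $M\symmdiff N$ into its connected components $C_1,\dots,C_k$, which are pairwise vertex-disjoint $M$-alternating cycles because every vertex of $M\symmdiff N$ has degree two. For each $i$, let $X_i\subseteq V^\ast$ be the set of faces of $G$ enclosed by $C_i$; by planar duality $X_i$ induces a connected subtree of $G^\ast$ and the cut $C_i^\ast := \delta_{G^\ast}(X_i)$ is exactly the set of $G^\ast$-edges corresponding to the primal edges of $C_i$ (counting $u^\ast\tilde u^\ast\in\tilde E^\ast$ whenever $u^\ast\in X_i$). The first key step is to establish the identity
\[
\ell(e^\ast) \;=\; |\{i : e^\ast \in C_i^\ast\}| \qquad \text{for every } e^\ast \in E(G^\ast).
\]
For a chord edge $e^\ast \in E^\ast_{\mathrm{in}}$, the hypothesis $E_{\mathrm{in}} \cap M \cap N = \emptyset$ yields $\ell(e^\ast) = |M \cap \{e\}| + |N \cap \{e\}| = \mathbf{1}[e \in M\symmdiff N]$, which matches the (at most one) $i$ with $e \in C_i$. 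For $e^\ast = u^\ast \tilde u^\ast \in \tilde E^\ast$, $\ell(e^\ast) = 1$ iff $u^\ast \in U^\ast_1$, i.e., the outer edges of $u^\ast$ lie in $M \symmdiff N$; since consecutive outer edges of $u^\ast$ share a vertex of degree two in $M \symmdiff N$, all of them lie on the same component $C_i$, so $u^\ast$ is enclosed by exactly that $C_i$.

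Using this identity, any path $P^\ast$ from $u^\ast$ to $v^\ast$ in $G^\ast$ satisfies $\ell(P^\ast) = \sum_i |P^\ast \cap C_i^\ast|$. Walking along $P^\ast$ and tracking the indicator $\mathbf{1}[w^\ast \in X_i]$ gives the elementary parity identity
\[
|P^\ast \cap \delta(X_i)| \equiv \mathbf{1}[u^\ast \in X_i] + \mathbf{1}[v^\ast \in X_i] \pmod 2,
\]
valid in any graph. Summing over $i$, we obtain $\ell(u^\ast, v^\ast) \equiv \Phi(u^\ast) + \Phi(v^\ast) \pmod 2$, where $\Phi(w^\ast) := |\{i : w^\ast \in X_i\}|$.

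Finally, the diameter of the tree $G^\ast$ is attained at two of its leaves, and I claim every leaf lies in $\tilde U^\ast$. Indeed, if $v^\ast \in V^\ast$ satisfies $v^\ast \in U^\ast$, then $v^\ast$ has $\tilde v^\ast$ as a neighbour and the only way it would have no other neighbour is for its face boundary to consist entirely of outer edges, forcing $G$ to be a single cycle (a degenerate case excluded by the very definition of $\mathcal{P}_{u^\ast}$ as a family of paths); and if $v^\ast \notin U^\ast$, then the boundary of $v^\ast$ is a cycle of chords contributing at least three distinct chord-edge neighbours in $G^\ast$. Hence the diameter is attained at some pair $\tilde u^\ast, \tilde v^\ast \in \tilde U^\ast$, and since $X_i \subseteq V^\ast$ we have $\Phi(\tilde u^\ast) = \Phi(\tilde v^\ast) = 0$, giving $\mathrm{gap}(M, N) \equiv 0 \pmod 2$. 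The main technical point is the identity for boundary edges of $\tilde E^\ast$, which rests on the edge-disjointness of the cycles $C_i$ and the chain of adjacencies along $C_{\mathrm{out}}$; once that is in place, the conclusion follows from a routine tree-cut parity computation.
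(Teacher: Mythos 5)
Your argument is correct, but it follows a genuinely different route from the paper's proof. The paper dualizes the diameter path itself: after sliding its endpoints into $\tilde{U}^\ast$, the chords it uses together with one outer edge at each end form an edge cut $C$ of $G$, the parity of the path length equals that of $|M\cap C|+|N\cap C|$, and the latter is even because a perfect matching meets every edge cut with the parity of the separated vertex set; note that this argument never uses $E_{\rm in}\cap M\cap N=\emptyset$. You instead dualize the symmetric difference: each cycle of $M\symmdiff N$ becomes the $G^\ast$-cut of its set of enclosed faces, the lengths in \eqref{eq:ell} count membership in these cuts, and a routine path--cut parity count gives $\ell(u^\ast,v^\ast)\equiv\Phi(u^\ast)+\Phi(v^\ast)\pmod 2$, which vanishes on the leaves in $\tilde{U}^\ast$ realizing the diameter. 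Your route is longer, uses the standing hypothesis $E_{\rm in}\cap M\cap N=\emptyset$, and needs a leaf analysis of $G^\ast$, but it buys a mod-$2$ potential on all of $V(G^\ast)$ (hence the parity of every distance, not only the diameter) and it makes explicit the degenerate case where $G$ is the single cycle $C_{\rm out}$ with $M\neq N$, in which $V^\ast$ contains a leaf and the gap as defined equals $1$ --- a case the paper's proof also silently skirts when it assumes the diameter path can be extended into $\tilde{U}^\ast$. Two steps you should still write out: first, that all outer edges of a face $u^\ast\in U^\ast_1$ lie on a single cycle of $M\symmdiff N$ that encloses $u^\ast$ (these edges may form several disjoint paths, so ``consecutive edges share a vertex'' alone does not chain them; use that a cycle through an outer edge of $u^\ast$ has $u^\ast$ on its enclosed side, and that two vertex-disjoint cycles in an outerplanar graph cannot be nested, which also justifies your ``exactly''); second, the converse statement that a face of $U^\ast_2$ is enclosed by no cycle of $M\symmdiff N$ (otherwise its outer edges would lie on that cycle), since your edge-length identity on $\tilde{E}^\ast$ needs both directions.
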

\begin{proof}
Consider a path $P^\ast$ whose length is equal to ${\rm gap}(M, N)$ in $G^\ast$.
We may assume that the end vertices of $P^\ast$ are in $\tilde{U}^\ast$, as otherwise we can extend the path to some vertex in $\tilde{U}^\ast$ without decreasing the length.
Let $\tilde{u}, \tilde{v}\in \tilde{U}^\ast$ be the end vertices of $P^\ast$.
This means that the faces $u$ and $v$ touch the outer face.
Take arbitrary edges $e_u\in E_u\cap C_{\rm out}$ and $e_v\in E_v\cap C_{\rm out}$.
Then $(P\cap E_{\rm in})\cup \{e_u, e_v\}$ forms a cut $C$ in $G$ by the duality. 
By the definition of $\ell$, for $w\in\{u, v\}$, it holds that $\ell (w, \tilde{w})=0$ if and only if $|M\cap \{e_w\}|=|N\cap \{e_w\}|$.
Hence the parity of $\sum_{e^\ast\in E^\ast (P^\ast)}\ell (e^\ast)$ is the same as that of $|M\cap C|+|N\cap C|$.
Since $M$ and $N$ are perfect matchings, the parities of $|M\cap C|$ and $|N\cap C|$ are the same.
Therefore, $|M\cap C|+|N\cap C|$ is even, and thus ${\rm gap}(M, N)$ is also even.
\end{proof}

A main theorem of this subsection is to give a characterization of the optimal value with 
${\rm gap}(M, N)$.

\begin{theorem}\label{thm:subprob}
Let $I=(G, M, N)$ be an instance of \prob such that $G$ is a $2$-connected outerplanar graph.
If $E_{\rm in}\cap M\cap N$ is empty, then it holds that
$\opt (I)  = {{\rm gap}(M, N)}/2$.
\end{theorem}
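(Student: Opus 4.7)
The plan is to prove both inequalities: $\opt(I) \ge {\rm gap}(M,N)/2$ (lower bound) and $\opt(I) \le {\rm gap}(M,N)/2$ (upper bound).

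For the lower bound, I would extend $\ell$ to a length function $\ell_{M'}$ for every perfect matching $M'$ by substituting $M'$ for $M$ throughout~\eqref{eq:ell} and in the $U^\ast_1/U^\ast_2$ classification. The crucial technical claim is that a single move $M' \to M'' := M' \symmdiff C$ via an $M'$-alternating cycle $C$ changes $\ell$ by exactly $\pm 1$ on each edge of the corresponding dual cut $C^\ast$ in $G^\ast$, and by $0$ on every other edge. For an inner edge $e^\ast \in C^\ast \cap E^\ast_{\rm in}$, the chord $e \in C$ flips its matching membership, and the claim is immediate. For a tilde edge $u^\ast \tilde u^\ast \in C^\ast$, the face $u$ lies strictly inside $C$, so every outer edge around $u$ lies in $C$ and flips membership; combined with the alternating-path observation used to define $U^\ast_1$ and $U^\ast_2$, this toggles $u$ between the two groups, hence also toggles $\ell(u^\ast \tilde u^\ast)$. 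Because $G^\ast$ is a tree and $C^\ast$ is the set of edges leaving a connected subtree $X^\ast$, any path in $G^\ast$ meets $C^\ast$ in at most two edges. Consequently each move changes ${\rm gap}$ by at most $2$, and since ${\rm gap}(N, N) = 0$, we obtain $\opt(I) \ge {\rm gap}(M, N)/2$.

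For the upper bound I would proceed by induction on ${\rm gap}(M,N)$. The base case ${\rm gap}(M,N) = 0$ forces $\ell(e^\ast) = 0$ on every edge of $G^\ast$, which by the definition of $\ell$ forces $M$ and $N$ to coincide on every inner edge and on every outer-face path, so $M = N$ and $\opt(I) = 0$. For the inductive step (${\rm gap}(M,N) \ge 2$ by the parity lemma), the goal is to exhibit a single $M$-alternating cycle $C$ satisfying both (i) ${\rm gap}(M \symmdiff C, N) = {\rm gap}(M,N) - 2$ and (ii) $E_{\rm in} \cap (M \symmdiff C) \cap N = \emptyset$, so that the induction hypothesis applies to $(G, M \symmdiff C, N)$. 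To construct $C$, I would fix a longest path $P^\ast$ in $G^\ast$ (with both endpoints in $\tilde U^\ast$) and look for a connected subtree $X^\ast \subseteq V^\ast$ whose boundary cycle $C$ in $G$ is $M$-alternating, whose dual cut $C^\ast$ meets $P^\ast$ in exactly two length-$1$ edges that both drop to length $0$, and which is large enough that the move does not create a new chord lying in $M \cap N$.

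The principal obstacle is establishing that such an $X^\ast$ always exists. The naive candidate $X^\ast = \{u_0^\ast\}$ at a terminal face of $P^\ast$ can fail, either because $\partial u_0$ is not $M$-alternating or because flipping it injects the adjacent chord into $M \cap N$ and breaks disjointness. My intended remedy is to enlarge $X^\ast$ greedily along $P^\ast$, absorbing any offending chord into the interior of $X^\ast$ so that it leaves the cut $C^\ast$, with $X^\ast = V^\ast$ (equivalently $C = C_{\rm out}$) as the ultimate fallback. The technical heart of the argument will be to show that such an enlargement always terminates at some $X^\ast$ whose boundary cycle is $M$-alternating and whose application simultaneously realises the gap drop by $2$ and preserves the disjoint-case hypothesis, exploiting the assumption $E_{\rm in} \cap M \cap N = \emptyset$ together with the alternating-path structure around outer-face boundaries.
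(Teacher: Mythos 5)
Your lower bound is sound and is essentially the paper's argument (Claim~\ref{clm:lb}): the flipped cycle corresponds to a cut of the tree $G^\ast$ leaving a connected subtree, any path meets such a cut at most twice, and only cut edges change length (by one), so each move decreases ${\rm gap}$ by at most $2$.

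The upper bound, however, is where the theorem's real content lies, and there you have a genuine gap: the existence statement corresponding to the paper's Claim~\ref{clm:hard_direction} (an $M$-alternating cycle $C$ with ${\rm gap}(M,N)={\rm gap}(M\symmdiff C,N)+2$) is exactly what you defer as ``the technical heart,'' and your sketched construction would not deliver it even if the greedy enlargement were shown to terminate. First, your target conditions are too weak: making the cut meet \emph{one fixed} longest path $P^\ast$ in two length-one edges that drop to zero only shortens that particular path; another diametral path may avoid the cut entirely (or be met only in edges that increase), so the diameter need not drop, and hence the induction on ${\rm gap}$ cannot proceed. The paper instead controls \emph{all} long paths at once by choosing a center $r^\ast$ of $G^\ast$ with eccentricity $d={\rm gap}/2$ (after two clean-up reductions: delete chords outside $M\cup N$, and delete leaf faces of $G^\ast-\tilde U^\ast$ lying in $U^\ast_2$), taking a \emph{minimal} connected $X^\ast\ni r^\ast$ whose cut uses only edges of $M^\ast\cup\tilde E^\ast$, proving the boundary cycle is $M$-alternating via this minimality together with $2$-connectivity and outerplanarity, and then showing every distance from $r^\ast$ to $\tilde U^\ast$ falls to at most $d-1$, so the new diameter is at most $2d-2$. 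Second, your worry about the flip creating a chord in $M\cap N$ (and the proposed ``absorption'' remedy) is unnecessary once the cut is restricted to $M^\ast\cup\tilde E^\ast$: then every chord of $C$ lies in $M$, so the chords of $M\symmdiff C$ form a subset of those of $M$ and disjointness from $N$ on $E_{\rm in}$ is preserved automatically. Without a proof that a cycle achieving the exact drop of $2$ always exists, the inequality $\opt(I)\le{\rm gap}(M,N)/2$ is not established.
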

\begin{proof}
To show the theorem, we first prove the following claim.
\begin{claim}
\label{clm:lb}
For any $M$-alternating cycle $C$, it holds that
\[{\rm gap}(M, N) \leq {\rm gap}(M\symmdiff C, N) +2.\]
\end{claim}
\begin{claimproof}[Proof of Claim \ref{clm:lb}]
By the duality, the cycle $C$ in $G$ corresponds to a cut $C^\ast$ in $G^\ast$ such that the inside is connected.
Such a cut intersects with any path in $G^\ast$ at most twice as $G^\ast$ is a tree, and only the intersected edges can change the length by one.
Therefore, the distance can be decreased by at most $2$.
\end{claimproof}

Consider a shortest reconfiguration sequence $\langle M_0, M_1, \ldots, M_t \rangle$ from $M_0=M$ to $M_t = N$.
Then, $t = \opt (I)$.
For each $i=1,\dots, t$,  it then holds that
${\rm gap}(M_{i-1}, N) \leq {\rm gap}(M_i, N)+2$.
By repeatedly applying the above inequalities, we obtain
\[
	{\rm gap}(M, N)= {\rm gap}(M_0, N) \leq {\rm gap}(M_t, N)+2t = 2t = 2 \opt(I)
\]
since ${\rm gap}(M_t, N)=0$.
Hence it holds that
$\opt(I) \geq {{\rm gap}(M, N)}/2$.

It remains to show that 
$\opt(I) \leq {{\rm gap}(M, N)}/2$.
We prove the following claim.
\begin{claim}\label{clm:hard_direction}
There exists an $M$-alternating cycle $C$ such that 
\begin{equation}\label{eq:someC}
{\rm gap}(M, N) = {\rm gap}(M\symmdiff C, N) +2.
\end{equation}
\end{claim}
\begin{claimproof}[Proof of Claim \ref{clm:hard_direction}]
We prove the claim by induction on the number of edges.

We first observe that we may assume that $E_{\rm in}\setminus (M\cup N)=\emptyset$.
Otherwise, we can just delete all the edges in $E_{\rm in}\setminus (M\cup N)$, and apply the induction to find an $M$-alternating cycle $C$ that satisfies \eqref{eq:someC} for the modified graph.
Since the deletion does not change the gap, $C$ is a desired cycle in $G$ as well.
Therefore, we may assume that all the edges in $E^\ast_{\rm in}$ have length $1$.

\begin{figure}[t]
	\centering
	\includegraphics{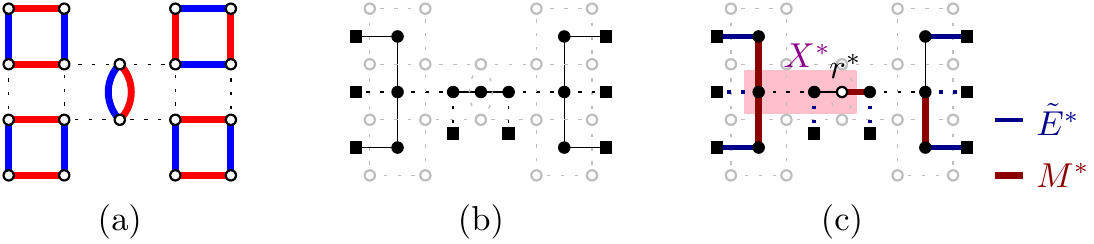}
	\caption{Illustration of the proof of Claim \ref{clm:hard_direction}.
		(a) The perfect matchings $M$ and $N$ are shown by red and blue, respectively.
		(b) The graph $G^\ast$.
		(c) The center $r$ and the chosen set $X^\ast$.}
	\label{fig:disjoint_case}
\end{figure}

In addition, we may assume that any leaf $u^\ast$ in $G^\ast - \tilde{U}^\ast$ belongs to $U^\ast_1$.
In other words, $M$ and $N$ are distinct in $E_{u^\ast}\cap C_{\rm out}$.
Indeed, suppose that there exists a leaf $u^\ast$ in $U^\ast_2$.
Then $\ell (u^\ast, \tilde{u}^\ast)=0$.
Since any chord is in either $M$ or $N$ by the above observation and the assumption that $E_{\rm in} \cap M \cap N = \emptyset$, $\ell (u^\ast, v^\ast)=1$, where $v^\ast$ is the unique neighbor to $u^\ast$ in $G^\ast - \tilde{U}^\ast$.
We delete $E_{u^\ast}\setminus E_{\rm in}$ from $G$, $M$, and $N$, and 
then delete all the isolated vertices. 
We denote the obtained graph by $G'$. 
This corresponds to deleting the face $u^\ast$ with $\tilde{u}^\ast$ from $G^\ast$, and adding $\tilde{v}^\ast$ to $G^\ast$ if necessary.
We can see that, in the modified graph $(G')^\ast$, we have $\ell (v^\ast, \tilde{v}^\ast)=1$, as $E_{u^\ast}\cap E_{v^\ast}$ is in either $M$ or $N$.
Hence this deletion preserves ${\rm gap}(M, N)$. 
We then apply the induction to $G'$ to find an $M$-alternating cycle $C$ that satisfies \eqref{eq:someC}.
This cycle is a desired one in $G$.
Thus we may assume that any leaf $u^\ast$ in $G^\ast - \tilde{U}^\ast$ belongs to $U^\ast_1$.

Since ${\rm gap}(M, N)$ is even by Lemma~\ref{lem:diam_parity}, we have ${\rm gap}(M, N) = 2d$ for some positive integer $d$. 
Then there exists a vertex $r^\ast\in V^\ast$ of $G^\ast$ such that, for every $v^\ast \in V(G^\ast)$, the $r^\ast$-$v^\ast$ path has length at most $d$.
Let $X^\ast\subseteq V^\ast$ be a minimal vertex subset of $G^\ast$ such that 
\begin{itemize}
\item $r^\ast\in X^\ast$
\item the subgraph induced by $X^\ast$ is connected in $G^\ast$
\item the cut $C^\ast=\{e^\ast = u^\ast v^\ast \mid u^\ast\in X^\ast, v^\ast\in V(G^\ast)\setminus X^\ast\}$ has only edges in $M^\ast \cup \tilde{E}^\ast$.
\end{itemize}
Such $X^\ast$ always exists
as $V^\ast$ satisfies all the conditions.
The cut $C^\ast$ corresponds to a cycle $C$ in $G$.
An example is given in \figurename~\ref{fig:disjoint_case}.

We claim that $C$ is $M$-alternating.
Assume not.
Then there exist two consecutive edges $e=uv$, $e'=v w$ in $C$ such that $e, e'\not\in M$, 
which implies that $e, e' \in C_{\rm out}$ as $E(C^\ast) \subseteq M^\ast \cup \tilde{E}^\ast$. 
Since $M$ is a perfect matching, the vertex $v$ is incident to another edge $f$ in $M$.
Since $G$ is 2-connected and outerplanar, there exists a path $P$ from $v$ to $C$ using the edge $f$ that is internally disjoint from $C$.
If $P$ has more than one edges, then $P$ ends with $u$ or $w$, since $G$ is outerplanar, which contradicts that 
$e, e' \in C_{\rm out}$.
Hence $P$ has only one edge.
However, this implies that $C$ has a chord in $M$, which contradicts that $C$ was chosen to be minimal.
Thus $C$ is an $M$-alternating cycle.

Consider taking $M\symmdiff C$.
Let $\ell'$ be the length defined by \eqref{eq:ell} with $M\symmdiff C$ and $N$.
It follows that, for an edge $e^\ast\in E(G^\ast)$, 
\[
\ell' (e^\ast) = 
\begin{cases}
\ell (e^\ast) & \text{ if } e^\ast\not\in C^\ast;\\
1-\ell (e^\ast) & \text{ if } e^\ast\in C^\ast. 
\end{cases}
\]
We will show that, for any vertex $\tilde{v}^\ast$ in $\tilde{U}^\ast$, we have $\ell'(r^\ast, \tilde{v}^\ast)\leq d-1$.
This proves the claim, as, for any two vertices $\tilde{u}^\ast, \tilde{v}^\ast$ in $\tilde{U}^\ast$, 
it holds that 
\[
\ell'(\tilde{u}^\ast, \tilde{v}^\ast)\leq \ell'(r^\ast, \tilde{u}^\ast) + \ell'(r^\ast, \tilde{v}^\ast) \leq 2d -2.
\]

Since $r^\ast\in X^\ast$ and no vertex in $\tilde{U}^\ast$ is in $X^\ast$, the $r^\ast$-$\tilde{v}^\ast$ path $P$ intersects with $C^\ast$ exactly once.
Hence the length of $P$ is changed by one by taking $M\symmdiff C$.
So, if $\ell (r^\ast, \tilde{v}^\ast)\leq d-2$, then $\ell'(r^\ast, \tilde{v}^\ast)\leq d-1$.
Thus
it suffices to consider the case when $\ell (r^\ast, \tilde{v}^\ast)\geq d-1$, i.e., $\ell (r^\ast, \tilde{v}^\ast)= d-1$ or $d$.

Assume that $\ell (v^\ast, \tilde{v}^\ast)= 0$, which implies that 
$v^\ast \in U^\ast_2$ and hence $v^\ast$ is not a leaf in $G^\ast - \tilde{U}^\ast$. 
In this case, there exists a leaf $u^\ast$ in $G^\ast - \tilde{U}^\ast$ such that $\ell (r^\ast, {u}^\ast)\ge \ell (r^\ast, {v}^\ast) + 1$. 
Since $u^\ast \in U^\ast_1$, we obtain 
\[
\ell (r^\ast, \tilde{u}^\ast) = \ell (r^\ast, {u}^\ast) + 1\ge \ell (r^\ast, {v}^\ast) + 2 = \ell(r^\ast, \tilde{v}^\ast) + 2 \ge d+1, 
\]
which is a contradiction. 

Thus, we may assume that $\ell (v^\ast, \tilde{v}^\ast)= 1$.
If the $r^\ast$-$\tilde{v}^\ast$ path $P$ intersects $C^\ast\cap M^\ast$, then the intersected cut edge has length 1, and hence we see that $\ell'(r^\ast, \tilde{v}^\ast) = \ell (r^\ast, \tilde{v}^\ast) -1 \leq d-1$.
Otherwise, that is, if $P$ intersects with $C^\ast\cap \tilde{E}^\ast$, then the intersected cut edge is $(v^\ast, \tilde{v}^\ast)$, and hence 
$\ell'(r^\ast, \tilde{v}^\ast) = \ell (r^\ast, \tilde{v}^\ast) -1 \leq d-1$.
Thus, 
$\ell'(r^\ast, \tilde{v}^\ast) \leq d-1$ in each case.
\end{claimproof}

For a perfect matching $M_{i-1}$ in $G$, it follows from Claim~\ref{clm:hard_direction} that there exists an $M_{i-1}$-alternating cycle $C_i$ such that
${\rm gap}(M_{i-1}, N) = {\rm gap}(M_{i-1}\symmdiff C_i, N) +2$.
Define $M_i = M_{i-1}\symmdiff C_i$, and repeat finding an alternating cycle satisfying the above equation.
The repetition ends when ${\rm gap}(M_{i}, N) = 0$, which means that $M_{i} =  N$ when $E_{\rm in}\cap M\cap N$ is empty.
The number of repetitions is equal to ${\rm gap}(M, N)/2$, and therefore, we have  
$\opt(I) \leq {{\rm gap}(M, N)}/2$.
Thus the proof is complete.
\end{proof}

\subsection{General Case}

Let $I=(G, M, N)$ be an instance of \prob such that $G$ is a $2$-connected outerplanar graph.
Define $E'_{\rm in} = E_{\rm in}\cap M \cap N$.
In this subsection, we deal with the general case, that is, $E'_{\rm in}$ is not necessarily empty. 
Then, there is a case when changing an edge in $E'_{\rm in}$ reduces the number of reconfiguration steps as in \figurename~\ref{fig:example}.
We call such a move an \textit{unhappy move}.
The key idea of our algorithm is to detect a set of edges necessary for unhappy moves.

Since $G$ is outerplanar and $2$-connected, 
any $F \subseteq E'_{\rm in}$ divides the inner region of $C_{\rm out}$ into $|F|+1$ parts $R_1, \dots , R_{|F|+1}$. 
For each $i=1, \dots , |F|+1$, 
let $G_i$ be the subgraph of $G$ consisting of all the vertices and the edges in $R_i$ and its boundary. 
Thus, each edge $e \in F$ appears on the outer face boundaries in two of these subgraphs.
See \figurename~\ref{fig:fig_partition_example}.
Let $\mathcal{G}_F = \{G_1,\dots, G_{|F|+1} \}$.
Note that each graph in $\mathcal{G}_F$ is outerplanar and $2$-connected. 
For each $H \in \mathcal{G}_F$, let $I_H = (H, M \cap E(H), N \cap  E(H))$.
We now show the following theorem.

\begin{figure}[tb]
  \centering
	\scalebox{0.8}{\includegraphics{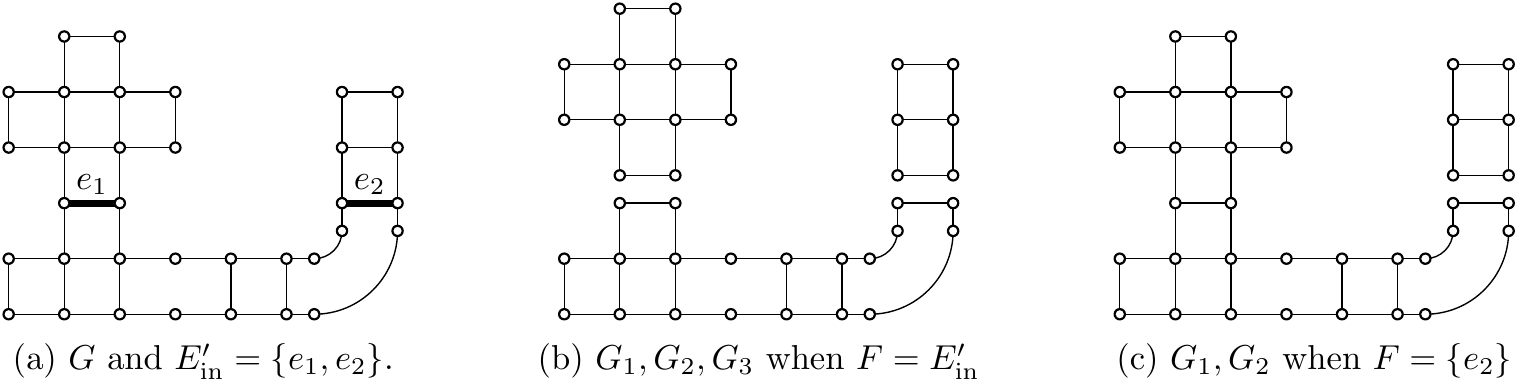}}
	\caption{The construction of a partition for the outerplanar graph in \figurename~\ref{fig:outerplanar_example2}. The edges in $E^\prime_{\mathrm{in}}$ are shown with bold lines.}
	\label{fig:fig_partition_example}
\end{figure}

\begin{theorem}
\label{thm:diamchar}
$\displaystyle \opt (I) = \frac{1}{2}  \min_{F\subseteq E'_{\rm in}} \sum_{H\in \mathcal{G}_F} {\rm gap}(I_H)$. 
\end{theorem}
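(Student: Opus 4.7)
The plan is to prove both inequalities in the identity
$\opt(I)=\tfrac{1}{2}\min_{F\subseteq E'_{\rm in}}\sum_{H\in\mathcal{G}_F}{\rm gap}(I_H)$
via the same structural observation: any edge of $E'_{\rm in}$ that is never touched during a reconfiguration acts as a rigid wall, forcing every alternating cycle of the sequence to stay inside a single block of the induced decomposition. Indeed, each endpoint $u$ of an untouched $F$-edge $e$ is saturated only by $e$ in every intermediate matching, so an alternating cycle avoiding $F$-edges cannot even visit $u$.

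For the direction $\opt(I)\ge \tfrac{1}{2}\min$, I would take a shortest reconfiguration sequence $\langle M_0,\ldots,M_t\rangle$ of length $t=\opt(I)$, set $F^*\subseteq E'_{\rm in}$ to be the set of chords that appear in every $M_i$, and apply the wall observation to conclude that every $C_i=M_{i-1}\symmdiff M_i$ lies in a single block $G_j\in\mathcal{G}_{F^*}$. The sequence thus decomposes into sub-sequences of lengths $t_j$ with $\sum_j t_j=t$; moreover, inside each $G_j$ every chord of $G_j$ lying in $M\cap N$ is forced to be touched, since otherwise it would belong to $F^*$ and thus not be a chord of $G_j$. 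Applying Theorem~\ref{thm:subprob} to each $G_j$ when $E_{\rm in}(G_j)\cap M\cap N=\emptyset$, and an extension that handles the ``every chord in $M\cap N$ is touched'' setting otherwise, gives $t_j\ge \tfrac{1}{2}{\rm gap}(I_{G_j})$; summing yields $\opt(I)\ge \tfrac{1}{2}\sum_{H\in\mathcal{G}_{F^*}}{\rm gap}(I_H)\ge \tfrac{1}{2}\min_F\sum_H{\rm gap}(I_H)$.

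For the direction $\opt(I)\le \tfrac{1}{2}\min$, fix any $F\subseteq E'_{\rm in}$: since every $F$-edge lies in $M\cap N$ and bounds exactly two blocks, reconfigurations performed inside different $G_j$'s do not interact and can be concatenated into a valid global sequence on $G$. Inside each $G_j$, invoking the same extension of Theorem~\ref{thm:subprob} produces a sub-sequence of length exactly $\tfrac{1}{2}{\rm gap}(I_{G_j})$, so the concatenation has total length $\tfrac{1}{2}\sum_{H\in\mathcal{G}_F}{\rm gap}(I_H)$; minimizing over $F$ closes the direction. The main obstacle is precisely the required extension of Theorem~\ref{thm:subprob}: when a chord $e\in E_{\rm in}\cap M\cap N$ is present, the dual edge $e^*$ has length $\ell(e^*)=2$ rather than $1$, so both the parity argument of Lemma~\ref{lem:diam_parity} and the cut-selection argument of Claim~\ref{clm:hard_direction} have to be refined so that the selected $M$-alternating cycle drops the diameter by exactly $2$ per step while still being forced to traverse each length-$2$ chord. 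Once this strengthening is in place, the decomposition picture above delivers both directions routinely.
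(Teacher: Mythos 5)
Your decomposition framework is the same as the paper's: you take a shortest sequence, let $F^*$ be the chords of $E'_{\rm in}$ that are never touched, argue (correctly, via the observation that an untouched chord in every intermediate matching blocks alternating cycles from visiting its endpoints) that each flipped cycle stays inside one block of $\mathcal{G}_{F^*}$, and reduce both inequalities to a per-block statement. The genuine gap is that this per-block statement --- that $\tfrac{1}{2}{\rm gap}(I_{G_j})$ is both achievable and a valid lower bound for sequences in $G_j$ that touch every chord of $G_j$ lying in $M\cap N$ --- is exactly the technical heart of the theorem, and you leave it as an unproved ``extension'' of Theorem~\ref{thm:subprob}. Worse, the route you sketch for that extension (refining Lemma~\ref{lem:diam_parity} and Claim~\ref{clm:hard_direction} to handle dual edges of length two) runs into a concrete obstruction on the lower-bound side: when chords of $M\cap N$ are present, the terminal gap is no longer zero (a chord $e\in N$ still contributes $\ell(e^*)=|M_t\cap\{e\}|+|N\cap\{e\}|=2$ at the end), so the telescoping argument ``gap drops by at most $2$ per step, hence $t\ge \tfrac12{\rm gap}$'' breaks down; one must somehow credit the forced unhappy moves on those chords, and your sketch does not say how.

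The paper avoids extending Theorem~\ref{thm:subprob} altogether. For each block $H$ it forms $H'$ by replacing every chord of $H$ in $M\cap N$ by two parallel edges, one placed in $M$ and one in $N$. This makes the disjointness hypothesis $E_{\rm in}\cap M\cap N=\emptyset$ hold in $H'$, so Theorem~\ref{thm:subprob} applies verbatim and gives $\opt(I_{H'})=\tfrac12{\rm gap}(I_{H'})$; it preserves the dual diameter, since the length-two dual edge is merely subdivided into two length-one edges, so ${\rm gap}(I_{H'})={\rm gap}(I_H)$; and the conversion of sequences goes both ways exactly where needed: any $I_{H'}$-sequence projects to an $I_H$-sequence of the same length (giving $\opt(I_H)\le\opt(I_{H'})$ and hence the upper bound for every $F$), while for $F=F_{\rm opt}$ the fact that every remaining chord is touched lets one lift the restricted sequence to $I_{H'}$ without increasing the length (giving the lower bound). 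If you want to complete your argument, proving this parallel-edge reduction is the missing step; as written, both directions of your proof rest on a lemma you have not established.
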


\begin{proof}
	Let $\langle M_0, M_1, \ldots, M_t \rangle$ be a shortest reconfiguration sequence from $M_0=M$ to $M_t = N$.
	We denote by $C_i$ the $M_{i-1}$-alternating cycle with $M_i = M_{i-1}\symmdiff C_i$.
	Define 
  \[
  F_{\rm opt}=\{e\in E'_{\rm in} \mid e \not\in C_i, \forall i \},
  \]
  which is the set of edges in $E'_{\rm in}$ that do not touch in the shortest reconfiguration sequence.
	Then $C_i$ is contained in some $H\in \mathcal{G}_{F_{\rm opt}}$, and can be used to obtain a reconfiguration sequence from $M\cap E(H)$ to $N\cap E(H)$ in $H$.
	Therefore, we have
	\begin{equation}
	\label{eq:02}
	\opt (I) = \sum_{H\in \mathcal{G}_{F_{\rm opt}}} \opt (I_H).
	\end{equation}
	We can also see that 
	\begin{equation}
	\label{eq:01}
	\opt (I) \leq \sum_{H\in \mathcal{G}_F} \opt (I_H) 
	\end{equation}
	for any $F\subseteq E'_{\rm in}$.
	
	To evaluate $\opt (I_H)$ for $H\in \mathcal{G}_{F}$, 
	we slightly modify the instance $I_H$ 
	by replacing 
	every inner edge of $H$ contained in $M \cap N$
	by two parallel edges each in $M$ and $N$, respectively.
	The obtained graph is denoted by $H'$, and the corresponding instance is denoted by $I_{H'}$.
	Since a reconfiguration sequence for $I_{H'}$ can be converted to one for $I_H$, it holds that $\opt (I_H) \leq \opt (I_{H'})$, and hence
	\begin{equation}
	\label{eq:03}
	\opt (I) \leq \sum_{H\in \mathcal{G}_F} \opt (I_{H}) \leq \sum_{H\in \mathcal{G}_F} \opt (I_{H'}) 
	\end{equation}
	holds for any $F\subseteq E'_{\rm in}$ by (\ref{eq:01}).
	Moreover, by the definition of $F_{\rm opt}$, there exists an index $i$ such that $e\in C_i$ for any $e\in E'_{\rm in}\setminus F_{\rm opt}$. 
	Therefore, for $H\in \mathcal{G}_{F_{\rm opt}}$, 
	the shortest reconfiguration sequence for $I_H$ can be converted to a reconfiguration sequence for $I_{H'}$. 
	Thus, $\opt (I_H) \geq \opt (I_{H'})$ holds for $H\in \mathcal{G}_{F_{\rm opt}}$, and hence 
	\begin{equation}
	\label{eq:04}
	\opt (I) = \sum_{H\in \mathcal{G}_{F_{\rm opt}}} \opt (I_{H}) \ge \sum_{H\in \mathcal{G}_{F_{\rm opt}}} \opt (I_{H'})
	\end{equation}
	by (\ref{eq:02}). 
	By (\ref{eq:03}) and (\ref{eq:04}), we obtain
	\begin{equation}
	\label{eq:14}
	\opt (I) = \min_{F\subseteq E'_{\rm in}} \sum_{H\in \mathcal{G}_F} \opt (I_{H'}), 
	\end{equation}
	and $F_{\rm opt}$ is a minimizer of the right-hand side. 
	
	By (\ref{eq:14}) and Theorem~\ref{thm:subprob}, 
	we obtain 
	\begin{equation}
	\label{eq:05}
	\opt (I) =  \frac{1}{2} \min_{F\subseteq E'_{\rm in}} \sum_{H\in \mathcal{G}_F} {\rm gap}(I_{H'}), 
	\end{equation}
	because each $I_{H'}$ satisfies the condition in Theorem~\ref{thm:subprob}. 
	Since $(H')^*$ is obtained from $H^*$ by subdividing some edges of length two 
	into two edges of length one, the diameter of $(H')^*$ is equal to that of $H^*$, that is, ${\rm gap}(I_{H'})={\rm gap}(I_H)$. 
	Therefore, we obtain the theorem by~(\ref{eq:05}). 
\end{proof}

As an example, we apply this theorem to the instance in \figurename~\ref{fig:outerplanar_example2}.  
See \figurename~\ref{fig:fig_partition_example}(c).
If $F$ consists of only the right thick edge in \figurename~\ref{fig:outerplanar_example2}(c), then 
$\mathcal{G}_F$ consists two graphs $G_1$ and $G_2$ such that ${\rm gap}(I_{G_1}) = 6$ and ${\rm gap}(I_{G_2}) = 2$. 
Since we can check that such $F$ attains the minimum in the right-hand side of
Theorem \ref{thm:diamchar},
we obtain $\opt (I) = 4$ by Theorem~\ref{thm:diamchar}.

In order to compute the value 
in Theorem \ref{thm:diamchar}
efficiently, 
we reduce the problem to \TreeDiamDec, whose definition will be given later.

For $F \subseteq E'_{\rm in}$, 
let $F^*$ be the edge subset of $E^*_{\rm in}$ corresponding to $F$, 
and let $\mathcal{G}_F = \{G_1,\dots, G_{|F|+1} \}$. 
Then, $G^* - F^*$ consists of $|F|+1$ components $T_1, T_2, \dots , T_{|F|+1}$
such that $T_i$ coincides with $G^*_i$ (except for the difference of edges of length zero)
for $i=1, \dots , |F|+1$. 
In particular, for each $i$, we have 
${\rm gap}(I_{G_i}) = \max \{ \ell(u^\ast, v^\ast)\mid u^\ast, v^\ast\in V(T_i)\}$, 
where $\ell$ is the length function on $E(G^*)$ defined by the instance $I=(G, M, N)$. 
We call $\max \{ \ell(u^\ast, v^\ast)\mid u^\ast, v^\ast\in V(T_i)\}$ the \emph{diameter} of $T_i$, 
which is denoted by ${\rm diam}_{\ell}(T_i)$. 
Then, Theorem~\ref{thm:diamchar} shows that
\begin{equation}
\opt (I) = \frac{1}{2}  \min_{F\subseteq E'_{\rm in}} \sum^{|F|+1}_{i=1} {\rm diam}_{\ell}(T_i). \label{eq:redMSDD}
\end{equation}
Therefore, we can compute $\opt (I)$ by solving the following problem 
in which $T = G^*$ and $E_0 = (E'_{\rm in})^\ast$. 

	\begin{center}
		\parbox{0.95\hsize}{
		\begin{listing}{{\bf Input:}}
              \item[{\TreeDiamDec}] 
			\item[{\bf Input:}] A tree $T$, an edge subset $E_0 \subseteq E(T)$, and a length function $\ell:  E(T) \to \Zzero$. 
			\item[{\bf Find:}] An edge set $F \subseteq E_0$ that minimizes $\sum_{T'} {\rm diam}_{\ell}(T')$, 
where the sum is taken over all the components $T'$ of $T - F$. 
		\end{listing}}
	\end{center}

In the subsequent subsection, we show that \TreeDiamDec can be solved in time polynomial in $|V(T)|$ and $L := \sum_{e \in E(T)} \ell(e)$.

\begin{theorem}\label{thm:TreeDiamDec}
\TreeDiamDec  can be solved in $O(|V(T)|L^4)$ time, where $L := \sum_{e \in E(T)} \ell(e)$. 
\end{theorem}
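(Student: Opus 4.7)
The plan is to use dynamic programming on the tree $T$ after rooting it at an arbitrary vertex $r$. For each vertex $v$ and each pair of nonnegative integers $(h,d)$ with $0 \le h,d \le L$, define $f_v(h,d)$ to be the minimum of $\sum_{T'}{\rm diam}_\ell(T')$ taken over all choices $F' \subseteq E(T_v) \cap E_0$ subject to two conditions on the component $C_v$ of $v$ in $T_v - F'$: its height (the maximum $\ell$-length of a path from $v$ staying inside $C_v$) equals $h$, and its diameter equals $d$. Here $T_v$ denotes the subtree rooted at $v$, and the sum ranges over the components $T'$ of $T_v - F'$ other than $C_v$. Each table has $O(L^2)$ entries.

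Leaves are initialized with $f_v(0,0) = 0$ and all other entries equal to $+\infty$. For an internal vertex $v$, I would fold in its children one at a time, starting from the ``just $v$'' table with $f_v^{\rm cur}(0,0)=0$. When a child $u$ connected by edge $e$ with value $(h_u,d_u)$ in $f_u$ is folded into the current table value $(h,d)$, two transitions are considered. First, if $e \in E_0$, one may \emph{cut} $e$; then $u$'s component becomes separated, so $d_u$ is added to the accumulated objective while the state $(h,d)$ remains unchanged. Second, one may \emph{keep} $e$ and merge $u$'s component into $v$'s, setting the new height to $h' = \max\{h, h_u + \ell(e)\}$ and the new diameter to $d' = \max\{d, d_u, h + h_u + \ell(e)\}$. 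The last term in the $d'$ update captures a longest path crossing $e$: any path in the merged component lies entirely in the old part, lies entirely inside $u$'s component, or crosses $e$ exactly once, in which case it is the concatenation of a downward path from $v$ (of length at most $h$), the edge $e$, and a downward path from $u$ (of length at most $h_u$).

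After processing all children of $v$, the table $f_v$ is determined. At the root $r$, the answer to \TreeDiamDec is $\min_{h,d}\bigl(f_r(h,d) + d\bigr)$, since the component containing $r$ itself must have its diameter added in at the end. Each fold examines $O(L^2) \cdot O(L^2) = O(L^4)$ pairs of states with constant per-pair work, and there are $|V(T)|-1$ folds in total, yielding an $O(|V(T)|L^4)$ running time as claimed.

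The main point to verify carefully is the correctness of the merge formulas, specifically that $(h,d)$ is a \emph{sufficient} summary of the partially-built component of $v$ for processing the remaining children, so that the full structure of $C_v$ need not be remembered. This holds because any future transition interacts with $C_v$ only through (i) the maximum-length downward path from $v$ inside $C_v$, which feeds into cross-edge path lengths, and (ii) the current diameter of $C_v$, which is simply combined by taking a maximum; all other structural information about $C_v$ is irrelevant. With this in place, the correctness proof becomes a routine induction on the height of the subtree, and there is no other technical obstacle worth highlighting.
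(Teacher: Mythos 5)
Your proposal is correct and is essentially the same dynamic program as in the paper: your state $(h,d)$ (height from $v$ and diameter of the frontier component) matches the paper's $(x,y)$ in its $(x,y,z)$-separators, your cut/keep transitions and merge formulas coincide with the paper's cases (a) and (b), and the child-by-child folding corresponds to the paper's subtrees $T_v^j$, giving the same $O(|V(T)|L^4)$ bound.
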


Since (\ref{eq:redMSDD}) shows that 
\prob on outerplanar graphs is reduced to \TreeDiamDec in which $L=O(|V(T)|)$, 
we obtain Theorem~\ref{thm:outerplanar}. 


\subsection{Algorithm for \TreeDiamDec}

The remaining task is to show Theorem~\ref{thm:TreeDiamDec}, that is, to give an algorithm for \TreeDiamDec 
that runs in $O(|V(T)|L^4)$ time. 
For this purpose, we adopt a dynamic programming approach. 

We choose an arbitrary vertex $r$ of a given tree $T$, and regard $T$ as a rooted tree with the root $r$. 
For each vertex $v$ of $T$, we denote by $T_v$ the subtree of $T$ which is rooted at $v$ and is induced by all descendants of $v$ in $T$. 
(See \figurename~\ref{fig:subtree}(a).) 
Thus, $T=T_r$ for the root $r$. 
Let $w_1, w_2, \dots, w_q$ be the children of $v$, ordered arbitrarily. 
For each $j \in \{1, 2, \dots, q \}$, we denote by 
$T_v^j$ the subtree of $T$ induced by 
$\{v\} \cup V(T_{w_1}) \cup V(T_{w_2}) \cup \cdots \cup V(T_{w_j})$.
For example, in \figurename~\ref{fig:subtree}(b), the subtree $T_v^j$ is 
surrounded by a thick dotted rectangle.
For notational convenience, we denote by 
$T_v^0$ the tree consisting of a single vertex $v$. 
Then, $T_v = T_v^0$ for each leaf $v$ of $T$. 
Our algorithm computes and extends partial solutions for subtrees $T_v^j$ from the leaves to the root $r$ of $T$ by keeping the information required for computing (the sum of) diameters of a partial solution. 

\begin{figure}[tb]
  \centering
		\includegraphics[width=0.6\linewidth]{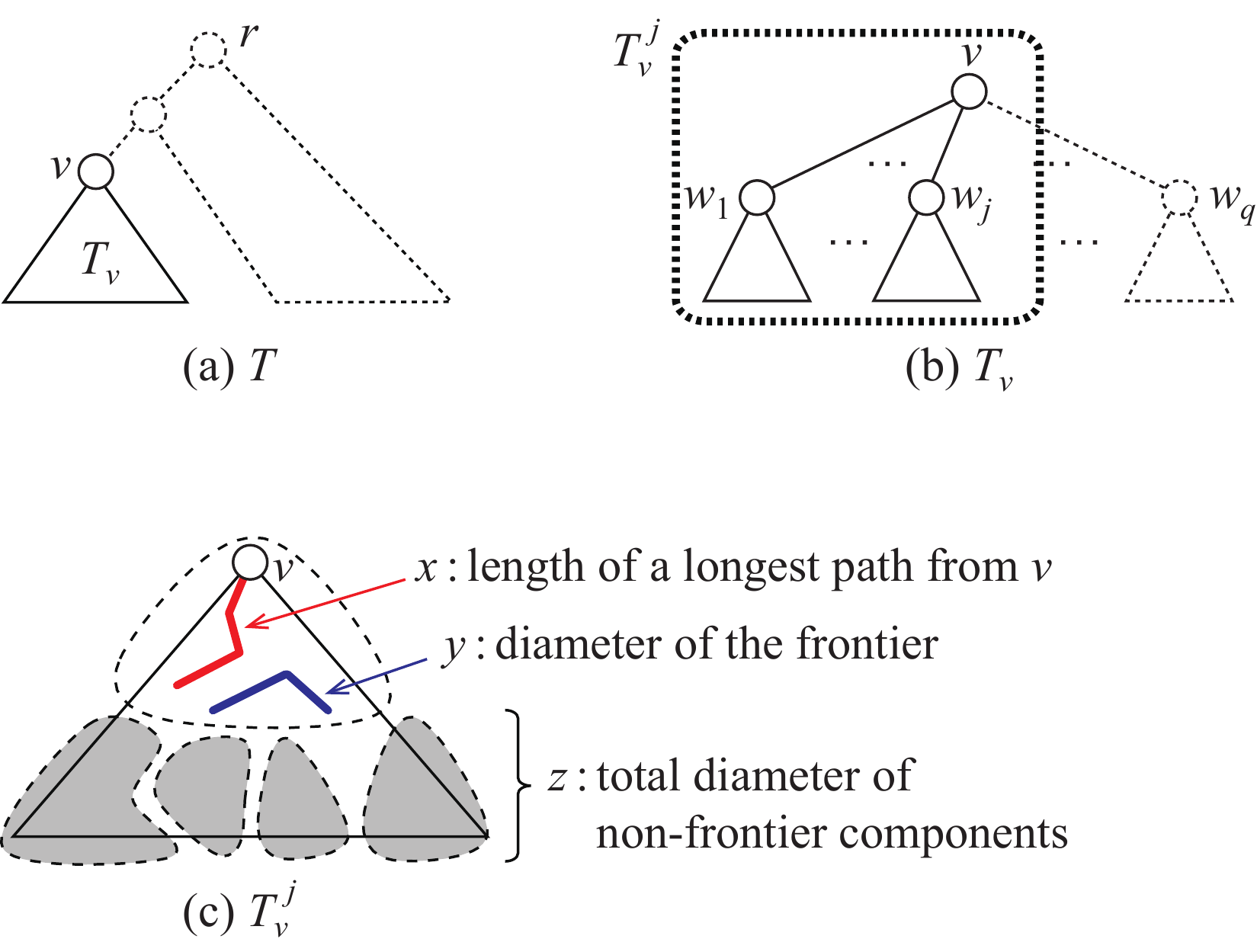}
	  \caption{(a) Subtree $T_v$ in the whole tree $T$, 
		(b) subtree $T_v^j$ in $T_v$, and
		(c) an $(x,y,z)$-separator of $T_v^j$.}
	\label{fig:subtree}
\end{figure}

We now define partial solutions for subtrees.
For a subtree $T_v^j$ and an edge subset $F^\prime \subseteq E_0 \cap E(T_v^j)$,
the \emph{frontier} for $F^\prime$ is the component (subtree) in $T_v^j - F^\prime$ that contains the root $v$ of $T_v^j$. 
We sometimes call it the \emph{$v$-frontier} for $F^\prime$ to emphasize the root $v$.
For three integers $x,y,z \in \{0,1,\dots,L\}$, 
the edge subset $F^\prime$ is called an \emph{$(x,y,z)$-separator} of $T_v^j$ if the following three conditions hold.
(See also \figurename~\ref{fig:subtree}(c).)
\begin{itemize}
\item $x = \max \{ \ell(v, u) \mid u \in V(T_{F^\prime})\}$, where $T_{F^\prime}$ is the $v$-frontier for $F^\prime$.
That is, the longest path from $v$ to a vertex in $T_{F^\prime}$ is of length $x$. 
\item $y = {\rm diam}_{\ell}(T_{F^\prime})$, that is, $y$ denotes the diameter of the $v$-frontier $T_{F^\prime}$ for $F^\prime$. 
\item $z = \sum_{T'} {\rm diam}_{\ell}(T')$, where the sum is taken over all the components $T'$ of $(T - F^\prime) \setminus T_{F^\prime}$. 
\end{itemize}
Note that $x \le y$ always holds for an $(x,y,z)$-separator of $T_v^j$. 
We then define the following function:
for a subtree $T_v^j$ and two integers $x,y \in \{0,1,\ldots, L\}$, we let 
\[
	f(T_v^j; x, y) = \min \left\{ z \mid \mbox{$T_v^j$ has an $(x,y,z)$-separator} \right\}. 
\]
Note that $f(T_v^j; x, y)$ is defined as $+ \infty$ 
if $T_v^j$ does not have an $(x,y,z)$-separator for any $z \in \{0,1,\ldots, L\}$. 
Then, the optimal objective value to \TreeDiamDec can be computed as $\min \{ y+f(T; x, y) \mid x,y \in \{0,1,\ldots, L\} \}$.

	For a given tree $T$, our algorithm computes $f(T_v^j; x,y)$ for all possible triplets $(T_v^j, x,y)$ from the leaves to the root $r$ of $T$, as follows.

\medskip

\noindent
{\bf Initialization.}
We first compute $f(T_v^0; x,y)$ for all vertices $v \in V(T)$ (including internal vertices in $T$). 
Recall that $T_v^0$ consists of a single vertex $v$. 
Therefore, we have 
\[
f(T_v^0; x,y) = 
\begin{cases}
0 & \mbox{if $x=y=0$};\\
+\infty  & \mbox{otherwise}.
\end{cases}
\]
Notice that we have computed $f(T_v; x,y)$ for all leaves $v$ of $T$, since $T_v =T_v^0$ if $v$ is a leaf. 
\medskip

\noindent
{\bf Update.}
We now consider the case where $j \ge 1$. 
To compute $f(T_v^j; x,y)$, we  classify $(x,y,z)$-separators of $T_v^j$ into the following two groups (a) and (b).
Note that $(x,y,z)$-separators of Group~(b) exist only when $vw_j \in E_0$. 
\medskip

\noindent
(a) The vertices $v$ and $w_j$ are contained in the same component. (See also \figurename~\ref{fig:treeupdate}(a).)

In this case, the edge $vw_j$ is not deleted, and the $v$-frontier for an $(x, y, z)$-separator of $T_v^j$ contains both $v$ and $w_j$. 
Therefore, we can obtain the $v$-frontier for an $(x, y, z)$-separator of $T_v^j$ by merging the $v$-frontier for some $(x^\prime, y^\prime, z^\prime)$-separator of $T_v^{j-1}$ with the $w_j$-frontier for some $(x^{\prime\prime}, y^{\prime\prime}, z^{\prime\prime})$-separator of $T_{w_j}$. 
Thus, we define
\[
f^{\textup{a}}(T_v^j;x,y) := \min \left\{ f(T_v^{j-1}; x^\prime, y^\prime) + f(T_{w_j}; x^{\prime\prime}, y^{\prime\prime}) \right\},
\]
where the minimum is taken over all integers $x^\prime, y^\prime, x^{\prime\prime}, y^{\prime\prime} \in \{0,1,\ldots,L\}$ such that
$x = \max \{ x^\prime, x^{\prime\prime}+\ell(vw_j) \}$ and
$y = \max \{ y^\prime, y^{\prime\prime}, x^\prime+\ell(vw_j)+x^{\prime\prime} \}$.  
\medskip

\noindent
(b) The vertices $v$ and $w_j$ are contained in different components. (See also \figurename~\ref{fig:treeupdate}(b).)

In this case, the edge $vw_j$ is deleted, and hence this case happens only when $vw_j \in E_0$. 
Then, the $v$-frontier for an $(x, y, z)$-separator of $T_v^j$ is the $v$-frontier for some $(x^\prime, y^\prime, z^\prime)$-separator of $T_v^{j-1}$. 
Note that $w_j$ is contained in a non-frontier component for the $(x, y, z)$-separator of $T_v^j$, but the component forms the $w_j$-frontier for some $(x^{\prime\prime}, y^{\prime\prime}, z^{\prime\prime})$-separator of $T_{w_j}$, as illustrated in \figurename~\ref{fig:treeupdate}(b).
Thus, we need to take the diameter of the $w_j$-frontier into account when we compute $f(T_v^j; x,y)$ from $f(T_v^{j-1}; x^\prime, y^\prime)$ and $f(T_{w_j}; x^{\prime\prime}, y^{\prime\prime})$. 
Therefore, we define
\[
f^{\textup{b}}(T_v^j;x,y) := \min \left\{ f(T_v^{j-1}; x^\prime, y^\prime) + f(T_{w_j}; x^{\prime\prime}, y^{\prime\prime}) + y^{\prime\prime} \right\},
\]
where the minimum is taken over all integers $x^\prime, y^\prime, x^{\prime\prime}, y^{\prime\prime} \in \{0,1,\ldots,L\}$ such that
$x = x^\prime$ and
$y = y^\prime$. 
\medskip

\begin{figure}[tb]
  \centering
		\includegraphics[width=0.9\linewidth]{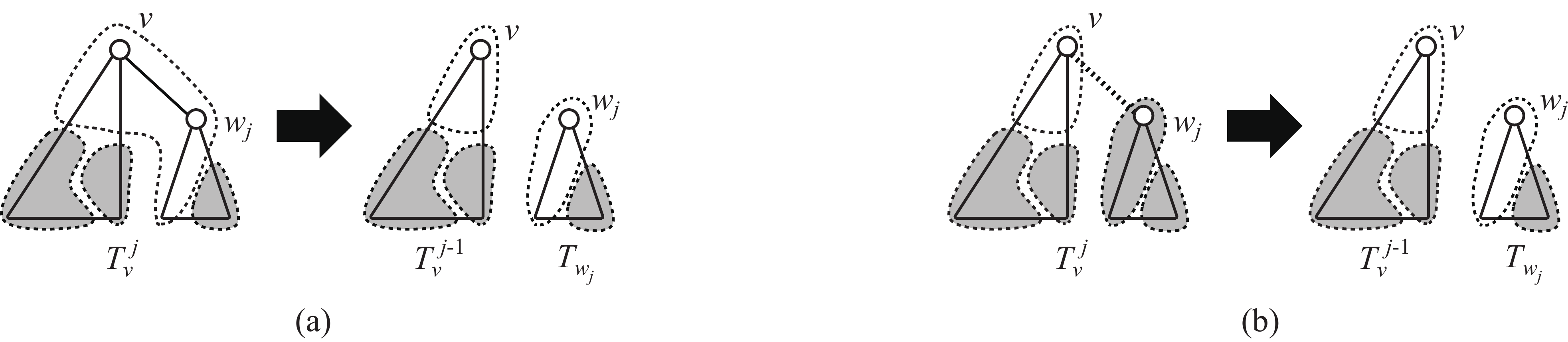}
	\caption{$(x,y,z)$-separators of a subtree $T_v^j$, and their restrictions to subtrees $T_v^{j-1}$ and $T_{w_j}$.}
	\label{fig:treeupdate}
\end{figure}

Then, we can compute $f(T_v^j; x,y)$ as follows:
\[
f(T_v^j; x,y) = 
\begin{cases}
\min \left\{ f^{\textup{a}}(T_v^j; x,y), f^{\textup{b}}(T_v^j; x,y) \right\} & \mbox{if $vw_j \in E_0$}; \\
f^{\textup{a}}(T_v^j; x,y) & \mbox{otherwise}.
\end{cases}
\]
Since $x^\prime, y^\prime, x^{\prime\prime}, y^{\prime\prime} \in \{0,1,\ldots,L\}$, this update can be done in $O(L^4)$ time for each subtree $T_v^j$. 
The number of subtrees $T_v^j$ is equal to $|V(T)| + |E(T)| = 2|V(T)|-1$. 
Therefore, this algorithm runs in $O(|V(T)| L^4)$ time in total.

Note that we can easily modify the algorithm so that 
we obtain not only the optimal value but also an optimal solution. 
This completes the proof of Theorem~\ref{thm:TreeDiamDec}. 

We remark here that 
the algorithm can be modified so that the running time is bounded by a polynomial in $|V(T)|$
by replacing the domain $\{0, 1, \dots , L \}$ of $x$ and $y$ with $D := \{ \ell(u, v) \mid u, v \in V(T)\}$. 
This modification is valid, because $f(T_v^j; x,y) = +\infty$ unless $x, y \in D$. 
Since $|D| =O(|V(T)|^2)$, the modified algorithm runs in $O(|V(T)| |D|^4) = O(|V(T)|^9)$ time. 
Note that, although this bound is polynomial only in $|V(T)|$, it is worse than $O(|V(T)| L^4)$ when $L=O(|V(T)|)$.


\section{NP-Hardness for Planar Graphs and Bipartite Graphs}
\label{sec:hardness}

In this section, we prove that {\prob} is \NP-hard even when the input graph is planar or bipartite.
\begin{theorem}
  \label{thm:hardness_planar}
	\prob is {\NP}-hard even for planar graphs of maximum degree three.
\end{theorem}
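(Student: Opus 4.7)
The plan is to reduce from \ham restricted to planar cubic graphs, known to be \NP-complete (Garey, Johnson, Tarjan). Given such an instance $H$ with $n := |V(H)|$, I would build in polynomial time an instance $I = (G, M, N)$ of \prob on a planar graph of maximum degree three, together with an integer threshold $t$, such that $H$ admits a Hamiltonian cycle if and only if $\opt(I) \le t$.

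For the construction, I would replace each vertex of $H$ by a small planar \emph{vertex gadget} having three ports (one per incident edge of the cubic graph), and each edge of $H$ by a planar \emph{edge gadget} joining the corresponding ports. Both gadget types have internal maximum degree three, and the resulting graph $G$ inherits planarity and the degree bound from the embedding of $H$. I would then define $M$ and $N$ locally so that (i) $M$ and $N$ coincide on every edge gadget, so that every edge-gadget edge lies in $M \cap N$ and is available for unhappy moves, and (ii) inside every vertex gadget, $M \symmdiff N$ is a single short alternating cycle. Consequently, $M \symmdiff N$ decomposes into exactly $n$ cycles, and the ``trivial'' reconfiguration that flips each vertex-gadget cycle once has length $n$. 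The threshold $t$ will be chosen slightly below $n$, for instance $t = n-1$.

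For the forward direction, I would show that a Hamiltonian cycle $H^\star$ of $H$ lifts to a single long $M$-alternating cycle $C^\star$ in $G$ threading through every vertex gadget via the two ports picked out by $H^\star$, crossing each edge gadget along $H^\star$ by means of unhappy moves on its $M\cap N$-edges. Flipping $C^\star$ costs one step and executes all $n$ local flips simultaneously; by design, the resulting ``damage'' inside the vertex gadgets is small and can be repaired in at most $t-1$ further local steps, giving a reconfiguration sequence of length at most $t$. For the converse, any reconfiguration sequence of length at most $t < n$ must contain at least one alternating cycle that simultaneously touches two or more vertex gadgets. Because each vertex gadget has exactly three ports, such a multi-gadget cycle projects naturally to a closed walk in $H$ visiting exactly those vertices, and the reduction is calibrated so that the only way to save enough steps to beat $t$ is for this projection to be a simple Hamiltonian cycle of $H$.

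The hard part will be the converse direction: an adversarial sequence might try to save steps by using several medium-sized alternating cycles, each merging only a few vertex gadgets, rather than a single global cycle. Ruling this out requires calibrating the gadgets so that every merge of $k < n$ vertex gadgets incurs a repair cost strictly larger than the single step saved by the merge, with the break-even point occurring only at $k=n$ and only when the projection closes into a simple cycle of $H$. I expect this calibration to be established via an amortized charging argument over the alternating cycles of the sequence, tracking how many vertex gadgets each cycle ``fixes'' against how many it ``damages'' through unhappy moves across edge gadgets, with planarity and the max-degree-three constraint preventing the adversary from engineering shortcuts that bypass the Hamiltonian structure of $H$.
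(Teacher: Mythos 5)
Your overall setup (vertex gadgets with three ports, $M \symmdiff N$ being one short cycle per gadget, edge-gadget edges in $M \cap N$ used for unhappy moves, a Hamiltonian cycle of $H$ lifting to one global $M$-alternating cycle) matches the paper's construction in spirit, but your choice of threshold $t=n-1$ and the accompanying converse argument contain a genuine gap. The calibration you hope for --- that merging $k<n$ vertex gadgets always incurs a repair cost exceeding the one step saved, with break-even only at $k=n$ --- cannot hold for gadgets of the kind you describe, precisely because in the alternating cycle model repairs are not local: a single alternating cycle can simultaneously repair the damage in arbitrarily many gadgets. Concretely, if $H$ contains any cycle $Z$ through $k\ge 3$ vertices (which a cubic graph essentially always does), you can lift $Z$ to an $M$-alternating cycle, flip it, and then flip a second global cycle through the same gadgets and edge gadgets to complete the local flips and restore the $M\cap N$ edges; this fixes those $k$ gadgets in $2$ steps, and flipping the remaining $n-k$ gadget cycles individually gives a sequence of length $n-k+2 \le n-1$. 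So $\opt(I)\le t$ would hold for virtually every instance, Hamiltonian or not, and the reduction breaks. Note also that this two-step global repair is exactly the mechanism your forward direction relies on, so you cannot engineer it away without destroying the yes-instance bound; any per-gadget amortized charging argument runs into the same non-locality (this is the same phenomenon the paper points out when explaining why $|M\symmdiff N|$ is not a good lower bound under this model).

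The paper avoids this by making the threshold $2$ rather than $n-1$: $M$ and $N$ are chosen so that $M\symmdiff N$ consists of the $n$ disjoint $4$-cycles $C_v$, hence $\opt\ge 2$ always; if $H$ is Hamiltonian then two global flips (the lifted Hamiltonian cycle using one edge of each $C_v$, then a second cycle using the other three edges of each $C_v$) give $\opt=2$; and conversely, if $\opt=2$ then after the first flip $M'$ the difference $M'\symmdiff N$ must be a \emph{single} cycle, which forces the first cycle to pass through \emph{every} gadget $C_v$, and the traversed subdivided edges $u_ev_e$ then trace out a Hamiltonian cycle of $H$. In other words, the combinatorial leverage comes not from a cost-accounting over many steps but from the structural fact that a length-$2$ sequence leaves exactly one cycle to fix after the first move; this also yields the $3/2$-inapproximability noted in the paper. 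To repair your proof you would need to replace the $t=n-1$ threshold and charging argument with this kind of ``distance exactly $2$ versus at least $3$'' dichotomy (or some other argument immune to multi-gadget repairs), since as stated the converse direction fails.
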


	We reduce the \ham problem, which is known to be {\NP}-complete even when a given graph is $3$-regular and planar~\cite{DBLP:journals/siamcomp/GareyJT76}.
	\begin{center}
		\parbox{0.95\hsize}{
		\begin{listing}{{\bf Question:} }
      \item[{\ham}]
			\item[{\bf Input:}] A $3$-regular planar graph $H=(V, E)$
			\item[{\bf Question:}] Decide whether $H$ has a Hamiltonian cycle, i.e., a cycle that goes through all the vertices exactly once.
		\end{listing}}
	\end{center}

\begin{proof}
	Let $H$ be a $3$-regular planar graph, which is an instance of 
  \ham.
	For each vertex $v\in V(H)$, we define a $8$-vertex graph $D_v$ (see also the top right in \figurename~\ref{fig:Reduction}):
	\begin{align*}
		V(D_v) & = \{v_1,v_2,v_3,v_4,v_5,v_6,v_7,v_8\},\\
		E(D_v) & = \{v_1 v_2, v_2 v_3, v_3 v_4, v_4 v_1, v_4 v_5, v_5 v_7, v_3 v_6, v_6 v_8\}.
	\end{align*}

	We construct an instance $I=(G, M, N)$ of our problem as follows.
 (See \figurename~\ref{fig:Reduction} as an example.)
	We subdivide each edge $e=uv$ in $H$ twice, and the obtained vertices are denoted by $u_e$ and $v_e$, where $u_e$ is closer to $u$.
	Then, for each vertex $v\in V(H)$, we replace $v$ with the graph $D_v$, and connect $v_7$ to $v_{e^{(1)}_v}$ and $v_{e^{(2)}_v}$, $v_8$ to $v_{e^{(2)}_v}$ and $v_{e^{(3)}_v}$, where $e^{(1)}_v$, $e^{(2)}_v$, $e^{(3)}_v$ are edges incident to $v$ and the order follows the planar embedding of $H$. 
	Let $E_v = \{v_7 v_{e^{(1)}_v}, v_7 v_{e^{(2)}_v}, v_8 v_{e^{(2)}_v}, v_8 v_{e^{(3)}_v}\}$.
	The resulting graph is denoted by $G$, i.e., $G$ is defined as follows:
	\begin{align*}
		V(G) & = \bigcup_{v\in V(H)} V(D_v) \cup \bigcup_{e=uv \in E(H)} \{u_e, v_e\},\\
		E(G) & = \left(\bigcup_{v\in V(H)}E (D_v)\cup E_v\right) \cup \{u_e v_e\mid e\in E(H)\}.
	\end{align*}
  It follows that $G$ is a planar graph of maximum degree three.
	Furthermore, we define initial and target perfect matchings $M$ and $N$ in $G$, respectively, to be 
	\begin{align*}
		M &= \{v_1 v_2, v_3 v_4, v_5 v_7, v_6 v_8 \mid v\in V(H)\}\cup \{u_e v_e \mid e\in E(H)\},\\
		N &= \{v_1 v_4, v_2 v_3, v_5 v_7, v_6 v_8 \mid v\in V(H)\}\cup \{u_e v_e \mid e\in E(H)\}.
	\end{align*}
	This completes the construction of our corresponding instance $I = (G,M,N)$. 
	The construction can be done in polynomial time. 

\begin{figure}
  \centering
		\includegraphics{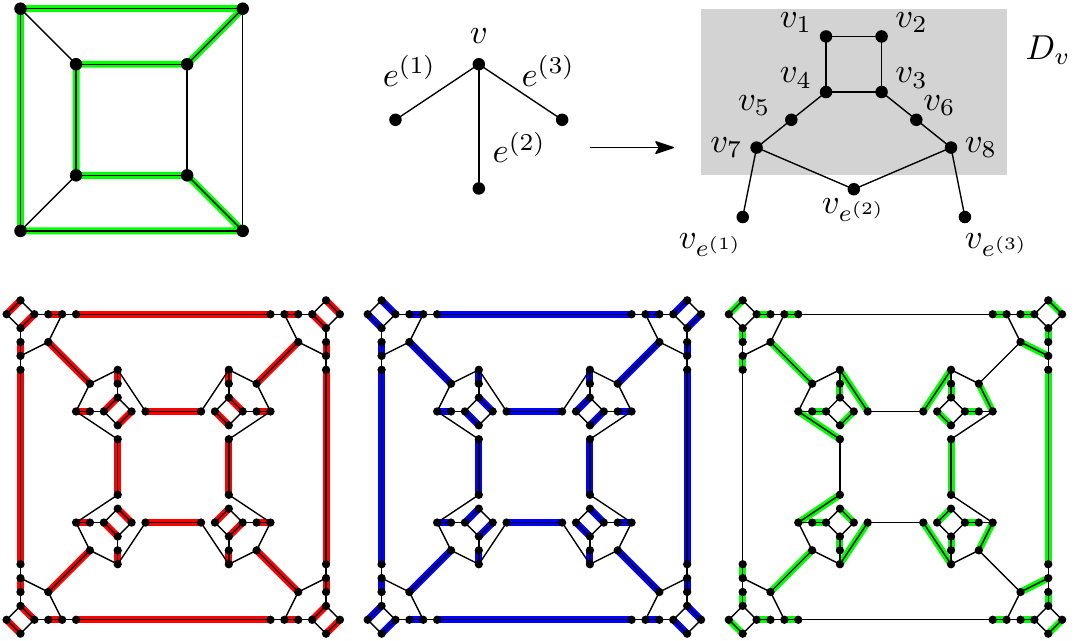}
	\caption{Reduction for planar graphs of maximum degree three. Top left: a yes instance $H$ of \ham with a green Hamiltonian cycle. Top right: the constructed fragment $D_v$.
		Bottom left: The initial perfect matching $M$ (red).
		Bottom middle: The target perfect matching $N$ (blue).
		Bottom right: The perfect matching obtained as $M \symmdiff C$, where
		$C$ corresponds to the Hamiltonian cycle of $H$.
	}
	\label{fig:Reduction}
\end{figure}

	We then give the following claims. 
  Recall that $t^\ast$ is the length of a shortest reconfiguration
  sequence for the constructed instance $I$.

  \begin{claim}\label{clm:hardness1}
	  It holds that $t^\ast\geq 2$.
  \end{claim}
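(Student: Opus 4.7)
The plan is to show that $M$ and $N$ cannot be reconfigured in a single step, by analyzing $M \symmdiff N$ directly and arguing that it is not a single cycle.

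First I would explicitly compute the symmetric difference $M \symmdiff N$. Restricting to a gadget $D_v$ for a vertex $v \in V(H)$, both $M$ and $N$ contain the edges $v_5 v_7$ and $v_6 v_8$, so these cancel. Moreover, $M \cap D_v = \{v_1 v_2, v_3 v_4\}$ while $N \cap D_v = \{v_1 v_4, v_2 v_3\}$, so the symmetric difference on $D_v$ is the $4$-cycle $v_1 v_2 v_3 v_4$. Similarly, the subdivision edges $u_e v_e$ for $e \in E(H)$ lie in both $M$ and $N$, hence contribute nothing. Thus
\[
M \symmdiff N \;=\; \bigcup_{v \in V(H)} \{v_1 v_2,\, v_2 v_3,\, v_3 v_4,\, v_4 v_1\},
\]
which is a vertex-disjoint union of $|V(H)|$ many $4$-cycles.

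Next I would invoke the standard fact that if $t^\ast = 1$, then there exists a single $M$-alternating cycle $C$ with $M \symmdiff C = N$, which forces $C = M \symmdiff N$. Since $H$ is a $3$-regular simple graph, $|V(H)| \ge 4$, so $M \symmdiff N$ consists of at least two (indeed at least four) vertex-disjoint $4$-cycles and cannot itself be a single cycle. This contradiction yields $t^\ast \ge 2$.

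The argument is essentially a direct calculation: no obstacle beyond carefully bookkeeping which edges cancel in the symmetric difference and noting the lower bound $|V(H)| \ge 4$ implicit in $3$-regularity. The nontrivial direction of the reduction (tying $t^\ast$ to the existence of a Hamiltonian cycle in $H$) is not needed here and will be handled by the subsequent claims.
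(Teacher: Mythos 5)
Your proposal is correct and follows essentially the same route as the paper: if $t^\ast = 1$ then $M \symmdiff N$ would have to be a single $M$-alternating cycle, which fails here. The only difference is that you spell out what the paper merely asserts, namely that $M \symmdiff N$ is the disjoint union of the $|V(H)| \ge 4$ cycles $C_v$, which is a welcome (and accurate) bit of bookkeeping but not a different argument.
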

  \begin{claimproof}[Proof of Claim \ref{clm:hardness1}]
	  We observe that, if $t^\ast=1$, then $M \symmdiff N$ must consist of one $M$-alternating cycle, but it is not true for our instance $I$.
	  Thus the length of a reconfiguration sequence is at least two.
  \end{claimproof}
    
	We remark that $G$ has an $M$-alternating path from $v^{(x)}_e$ to $v^{(y)}_e$ for any $x, y\in\{1,2,3\}$ with $x\neq y$.
	This implies that, for a cycle $C$ in $H$, there exists a corresponding $M$-alternating cycle $C'$ in $G$ such that it goes through vertices of $D_v$ for every $v\in V(C)$ and edges $u_e v_e$ for every $e\in E(C)$.
	
  \begin{claim}\label{clm:hardness2}
	  If $H$ has a Hamiltonian cycle $C$, then it holds that $t^\ast= 2$.
  \end{claim}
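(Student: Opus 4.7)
\medskip
\noindent\textbf{Proof plan.} Since Claim 3.5 already gives $t^\ast\ge 2$, the plan is to produce an explicit length-two sequence $\langle M,M_1,N\rangle$: that is, an $M$-alternating cycle $C_1$ and an $(M\symmdiff C_1)$-alternating cycle $C_2$ satisfying $C_1\symmdiff C_2 = M\symmdiff N$.

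First, I would lift the Hamiltonian cycle $C$ of $H$ to an $M$-alternating cycle $C_1$ in $G$ that follows a ``short'' route through each gadget. Explicitly, $C_1$ contains the subdivision edge $u_e v_e$ for every $e=uv\in C$, and for every $v\in V(H)$ it uses the five internal edges $v_5v_7, v_4v_5, v_3v_4, v_3v_6, v_6v_8$ of $D_v$ together with two edges of $E_v$ that attach $v_7$ and $v_8$ to the two subdivision vertices corresponding to the two edges of $C$ at $v$. The port set $E_v$ has the convenient property that, in each of the three possible pairs of edges of $C$ at $v$, one can always pick one port edge incident to $v_7$ and one incident to $v_8$; this makes $C_1$ a simple cycle. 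Using $M\cap E(D_v)=\{v_1v_2,v_3v_4,v_5v_7,v_6v_8\}$, together with $u_ev_e\in M$ and $E_v\cap M=\emptyset$, a direct edge-by-edge check confirms that $C_1$ is $M$-alternating.

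Next, I would define $C_2$ to share exactly the port edges and subdivision edges of $C_1$, but to use the seven internal edges $v_5v_7, v_4v_5, v_1v_4, v_1v_2, v_2v_3, v_3v_6, v_6v_8$ of each $D_v$, i.e.\ the ``long'' path through all eight vertices of $D_v$. This is again a simple cycle. To see $C_1\symmdiff C_2 = M\symmdiff N$, observe that outside the gadgets $C_1$ and $C_2$ agree and cancel, while inside each $D_v$ they share $\{v_5v_7,v_4v_5,v_3v_6,v_6v_8\}$ and differ only on $\{v_3v_4\}$ versus $\{v_1v_4,v_1v_2,v_2v_3\}$; these four edges form the $4$-cycle through $v_1,v_2,v_3,v_4$, which is exactly $(M\symmdiff N)\cap E(D_v)$. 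Summing over $v$ yields $C_1\symmdiff C_2 = M\symmdiff N$, and hence $M\symmdiff C_1\symmdiff C_2=N$.

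The main remaining obstacle is to verify that $C_2$ is $M_1$-alternating, where $M_1:=M\symmdiff C_1$. Applying $C_1$ flips the $M$-status of every port edge used by $C_1$ and of every subdivision edge $u_ev_e$ with $e\in C$; concretely, $M_1\cap E(D_v)=\{v_1v_2,v_4v_5,v_3v_6\}$ together with the two port edges used by $C_1$ at $v$, and $u_ev_e\in M_1$ iff $e\notin C$. I would then trace $C_2$ edge by edge through this updated matching and confirm the alternation. This careful bookkeeping at the interface between gadgets and subdivision vertices is the only non-mechanical step; once it is done, Claim 3.5 closes the proof by forcing $t^\ast=2$.
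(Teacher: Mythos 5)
Your construction is correct and is essentially the paper's own proof: the paper's $C'$ is exactly your $C_1$ (traversing each gadget via $v_7,v_5,v_4,v_3,v_6,v_8$ and using the single edge $v_3v_4$ of $C_v$), and its $C''$ is your $C_2$ (using the three edges $v_1v_4$, $v_1v_2$, $v_2v_3$ instead), giving the length-two sequence $\langle M, M\symmdiff C_1, N\rangle$, with optimality from Claim~\ref{clm:hardness1}. The alternation check you defer does go through: $M_1\cap E(D_v)=\{v_1v_2,v_4v_5,v_3v_6\}$ together with the two port edges used at $v$, and tracing $C_2$ through $v_7,v_5,v_4,v_1,v_2,v_3,v_6,v_8$ alternates as required, so nothing is missing.
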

  \begin{claimproof}[Proof of Claim \ref{clm:hardness2}]
	  We see that $G$ has an $M$-alternating cycle $C'$, corresponding to $C$ of $H$, that has one edge $v_3 v_4$ of $C_v$ for each $v\in V(C)$.
	  Then $M^\prime=M\symmdiff C'$ is a perfect matching.
	  In a similar way, $G$ has an $M^\prime$-alternating cycle $C''$, corresponding to $C$, that uses three edges $v_3 v_2$, $v_2 v_1$, and $v_1 v_4$ of $C_v$ for each $v\in V(C)$.
	  Then $M^\prime\symmdiff C''$ is equal to $N$.
	  Thus we can find a reconfiguration sequence of length two, which is shortest by Claim~\ref{clm:hardness1}.
  \end{claimproof}
  
  The $4$-cycle formed by $v_1,v_2,v_3,v_4$ is denoted by $C_v$.
  
  \begin{claim}\label{clm:hardness3}
	  If $t^\ast = 2$, then $H$ has a Hamiltonian cycle.
  \end{claim}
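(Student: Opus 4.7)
The plan is to show that a length-$2$ reconfiguration sequence between $M$ and $N$ in the constructed graph $G$ forces the first alternating cycle $C_1$ to project to a Hamiltonian cycle of $H$. Write $C_1, C_2$ for the two successive alternating cycles, so that $C_1 \symmdiff C_2 = M \symmdiff N = \bigsqcup_{v \in V(H)} C_v$, where $C_v$ is the $4$-cycle $v_1 v_2 v_3 v_4 v_1$ inside the gadget $D_v$. Every edge of $\bigsqcup_v C_v$ belongs to exactly one of $C_1, C_2$, while every other edge of $G$ lies in both $C_1, C_2$ or in neither.

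First I would analyze $C_1 \cap C_v$ for each $v$. Since $v_1$ and $v_2$ each have degree two in $G$, the three edges $v_1 v_2, v_2 v_3, v_1 v_4$ must all belong to the same one of $C_1, C_2$. A brief case analysis at $v_3, v_4$ yields $|C_1 \cap C_v| \in \{0, 1, 3, 4\}$. If $|C_1 \cap C_v| = 4$, then $C_1$ contains the $4$-cycle $C_v$; any additional edge of $C_1$ would raise the degree of some vertex of $C_v$ above two, so $C_1 = C_v$, which forces $C_2 = \bigsqcup_{u \ne v} C_u$ to be a disjoint union of several $4$-cycles --- impossible for a single cycle when $|V(H)| \geq 4$. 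The case $|C_1 \cap C_v| = 0$ is symmetric. Hence $|C_1 \cap C_v| \in \{1, 3\}$ for every $v$, and in either subcase the restriction $C_1 \cap D_v$ is a single path from $v_7$ to $v_8$ through $D_v$.

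Next I would examine the external edges at $v_7, v_8$. Let $i_1(v) \in \{1, 2\}$ and $i_2(v) \in \{2, 3\}$ index the unique external edges chosen by $C_1$ at $v_7$ and $v_8$, respectively. The case $(i_1(v), i_2(v)) = (2, 2)$ would confine $C_1$ to the small cycle inside $D_v \cup \{v_{e_v^{(2)}}\}$, again contradicting that $C_2$ is a single cycle. In each of the remaining cases $(1,2), (1,3), (2,3)$, a degree count at the subdivision vertices $v_{e_v^{(j)}}$ shows that exactly two of the three edges of $H$ incident to $v$ are \emph{activated}, meaning $u_e v_e \in C_1$. Because activation is a symmetric property of an edge $e = uv$, the activated edges form a $2$-regular spanning subgraph of $H$, i.e., a $2$-factor.

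Finally, since $C_1$ is a single simple cycle that enters and exits each $D_v$ exactly once through its two ports $v_7, v_8$, the macroscopic projection of $C_1$ onto $H$ is a single cycle visiting every vertex of $H$ exactly once, and it traverses precisely the activated edges; so the $2$-factor above is connected, hence a Hamiltonian cycle of $H$. The main technical hurdle is the careful local analysis in the second and third paragraphs: ruling out $|C_1 \cap C_v| \in \{0, 4\}$ and the degenerate external choice $(i_1(v), i_2(v)) = (2, 2)$ is essential and relies crucially on both $C_1$ and $C_2$ being single simple cycles rather than arbitrary disjoint unions.
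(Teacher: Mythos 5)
Your proof is correct, and it reaches the conclusion by a route whose mechanics differ from the paper's. The paper sets $C=M\symmdiff M'$, lets $W_C$ be the set of vertices of $H$ whose gadgets are touched via an edge $u_ev_e\in C$, and then counts components of $M'\symmdiff N$: every untouched gadget contributes its own $4$-cycle $C_v$, and any touched gadget contributes at least one further cycle, so the fact that $M'\symmdiff N$ must be a single cycle forces every gadget to be touched; the final step, that the projected edge set is a Hamiltonian cycle, is stated quite tersely. You instead work directly with $C_1\symmdiff C_2=M\symmdiff N$ and never invoke alternation: the degree-two vertices $v_1,v_2$ force each $C_v$ to split $1{+}3$ or $3{+}1$ between $C_1$ and $C_2$ (the $0{+}4$ and $4{+}0$ splits, and the degenerate port choice where $C_1$ leaves both $v_7$ and $v_8$ toward $v_{e^{(2)}_v}$, are excluded because they would make $C_1$ or $C_2$ a disjoint union of several $4$-cycles), so $C_1$ crosses every gadget as a $v_7$--$v_8$ path, activates exactly two edges of $H$ at each vertex, and the single-cycle structure of $C_1$ yields connectivity of the activated $2$-factor. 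This buys you two things: your argument is purely structural (it does not need the $M$-alternation that the paper implicitly relies on when asserting $M'\cap C_v=\{v_1v_2\}$ for touched gadgets, which is what rules out the $3$-edge traversal of $C_v$ by $C$), and the last step---why the projection is one cycle through all of $V(H)$ rather than merely a $2$-factor---is made explicit, whereas the paper leaves it to the reader; the paper's counting argument is in turn shorter. One small point to state explicitly: excluding the $0/4$ splits needs $|V(H)|\ge 3$, which holds since $H$ is $3$-regular, hence $|V(H)|\ge 4$.
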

  \begin{claimproof}[Proof of Claim \ref{clm:hardness3}]
	  We denote by $\langle M, M', N \rangle$ a shortest reconfiguration sequence of $I$.
	  Let $C = M\symmdiff M^\prime$.
	  We may assume that $C$ is not $C_v$ for any $v\in H$, as $t^\ast = 2$.
	  We will prove that the edge subset $F=\{e \in E(H)\mid u_e v_e\in C\}$ forms a Hamiltonian cycle in $H$.
	  We denote $W_C$ by the set of vertices in $H$ used in $F$.
	  Let $\overline{W_C}=V(H)\setminus W_C$.
	  Since $M^\prime\cap C_v$ and $N\cap C_v$ are distinct for $v\in \overline{W_C}$, the symmetric difference $M^\prime\symmdiff N$ has at least $|\overline{W_C}|$ disjoint $M^\prime$-alternating cycles.
	  Moreover, for a vertex $v\in W_C$, we see that $M^\prime\cap C_v=\{v_1 v_2\}$ and $N\cap C_v=\{v_1 v_4, v_2 v_3\}$, that are distinct.
	  Hence $M^\prime\symmdiff N$ has at least one $M^\prime$-alternating cycle disjoint from $\bigcup_{v\in \overline{W_C}} V(D_v)$.
	  Therefore, we have at least $|\overline{W_C}|+1$ disjoint $M^\prime$-alternating cycles.
	  However, $M^\prime\symmdiff N$ must consist of one cycle~(see Claim~\ref{clm:hardness1}), implying that $\overline{W_C}=\emptyset$.
	  This means that $C$ goes through $C_v$ for every $v$, and hence $C'$ is a Hamiltonian cycle in $H$.
	  Thus the claim holds.
  \end{claimproof}
	Therefore, it follows that $H$ has a Hamiltonian cycle if and only if $t^\ast = 2$.
	This completes the proof of Theorem~\ref{thm:hardness_planar}.
\end{proof}

The hardness for bipartite graphs of maximum degree at most three
can be obtained with a similar proof.

\begin{theorem}\label{thm:hardness_bipartite}
	{\prob} is \NP-hard even for bipartite graphs of maximum degree three.
\end{theorem}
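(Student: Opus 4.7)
The plan is to mimic the proof of Theorem~\ref{thm:hardness_planar}, adapting both the source problem and the gadget so that the resulting graph $G$ is bipartite and still has maximum degree three. The obstruction to directly reusing the planar construction is that, inside $D_v$, the two attachment vertices $v_7$ and $v_8$ lie in \emph{different} color classes of the bipartition of $D_v$, but they share the subdivision vertex $v_{e^{(2)}_v}$ as a common neighbor, which creates an odd cycle in $G$. So a new gadget is needed.

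I would reduce from \diham on directed graphs with in- and out-degree at most three (which is known to be \NP-complete), since orienting the Hamiltonian cycle lets us separate the three incident edges of each vertex into an "in" group and an "out" group. For each vertex $v$ of the directed input $H$, I would install a bipartite gadget $\tilde D_v$ with designated in-ports (one per incoming edge) all placed in one color class and designated out-ports (one per outgoing edge) all placed in the other color class. For each directed edge $(u,v)$ of $H$ I would attach a short bipartite connector (a subdivided path whose length has the parity required to make the global 2-coloring consistent) between an out-port of $\tilde D_u$ and an in-port of $\tilde D_v$. The gadget $\tilde D_v$ is designed so that, when the port matching is fixed from the outside, its restricted perfect matchings are exactly two matchings $M_v$ and $N_v$ whose symmetric difference is a single inner $4$-cycle; concretely, one replaces the $v_5,v_6,v_7,v_8$ part of $D_v$ by a longer bipartite subdivided appendage whose parity places all attachment vertices on the prescribed sides. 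We then define $M$ and $N$ on $G$ so that $M\cap E(\tilde D_v)=M_v$ and $N\cap E(\tilde D_v)=N_v$, and so that every connector edge/path belongs to $M\cap N$.

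With this setup the three claims carry over. Claim~\ref{clm:hardness1} is unchanged: $M\symmdiff N$ is a union of disjoint $4$-cycles (one per vertex), hence $t^*\ge 2$. For Claim~\ref{clm:hardness2}, a directed Hamiltonian cycle $\vec C$ of $H$ lifts to an $M$-alternating cycle $C'$ that enters each $\tilde D_v$ through the in-port corresponding to the edge of $\vec C$ entering $v$ and leaves through the out-port of the edge leaving $v$; applying $C'$ yields an intermediate matching $M'$, and a second alternating cycle of the same shape completes the transformation to $N$ in two steps. For Claim~\ref{clm:hardness3}, the same counting argument applies: if $t^*=2$, then both $M\symmdiff M'$ and $M'\symmdiff N$ must each be a \emph{single} alternating cycle, and the forced structure of $\tilde D_v$ for vertices skipped by this cycle would create additional disjoint alternating cycles in $M'\symmdiff N$, a contradiction unless every $\tilde D_v$ is traversed, i.e., the induced edge set in $H$ is a Hamiltonian cycle (respecting the orientation by the port structure).

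The main obstacle is designing the gadget $\tilde D_v$ so that it simultaneously (a) is bipartite with the three in-ports and the three out-ports correctly separated into the two color classes, (b) has maximum degree three, and (c) has exactly two perfect matchings compatible with each fixed external boundary condition, differing by a single inner $4$-cycle and forcing the alternating cycle to enter/exit through matched port pairs. Once such a gadget is in hand, verifying bipartiteness of $G$ reduces to checking parities along the connector paths, and the three claims above go through essentially verbatim, yielding \NP-hardness of \prob on bipartite graphs of maximum degree three.
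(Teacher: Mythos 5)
Your overall strategy is the same as the paper's: reduce from \diham with bounded in-/out-degrees, replace each vertex by a bipartite gadget containing an inner $4$-cycle on which $M$ and $N$ disagree, join gadgets according to the directed edges, and re-run the three claims (length at least two; a directed Hamiltonian cycle gives length exactly two; length two forces a directed Hamiltonian cycle). Your diagnosis of why the planar gadget fails bipartiteness (the attachment vertices $v_7,v_8$ lie in different color classes yet share a subdivision vertex as a neighbor) is also correct.

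However, there is a genuine gap: the entire technical content of the theorem is the concrete gadget, and you never construct it. You state explicitly that "the main obstacle is designing the gadget $\tilde D_v$" satisfying (a)--(c) and that the claims go through "once such a gadget is in hand" --- but exhibiting such a gadget \emph{is} the proof, and the claimed properties are not automatic. Your design choices also make the task harder than necessary. By reducing from \diham with in-/out-degree at most three you need three in-ports and three out-ports per gadget, so a \emph{single} alternating cycle could a priori pass through one gadget twice (entering through two distinct in-ports), in which case the edge set $F$ extracted in the analogue of Claim~\ref{clm:hardness3} need not be a simple Hamiltonian cycle; ruling this out is an additional property your unconstructed gadget must enforce. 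Likewise, requiring "exactly two perfect matchings compatible with each fixed external boundary condition" and putting all connector paths in $M\cap N$ are constraints you assert rather than realize. The paper sidesteps all of this by using Plesn\'{\i}k's result that \diham is \NP-complete for in-/out-degree at most two, so each gadget needs only a single entry vertex $v^+$ and a single exit vertex $v^-$ (each of degree at most three), joined by the path $v^+ v_1 v_2$, the $4$-cycle $v_2v_3v_4v_5$, and the path $v_5 v_6 v^-$; connector edges $u^-v^+$ are plain edges outside $M\cup N$, bipartiteness is immediate, and a cycle can traverse a gadget at most once since it must use both $v^+$ and $v^-$. Until you produce an explicit gadget with your properties (a)--(c) and verify the counting argument of Claim~\ref{clm:hardness3} against it, the proposal is an outline of the right approach rather than a proof.
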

We reduce the directed Hamiltonian cycle problem, which is known to be
\NP-complete even if digraphs have the maximum in-degree two
and the maximum out-degree two~\cite{DBLP:journals/ipl/Plesnik79}.
	\begin{center}
		\parbox{0.95\hsize}{
		\begin{listing}{{\bf Question:}}
      \item[{\diham}]
			\item[{\bf Input:}] A digraph $H=(V, E)$
			\item[{\bf Question:}] Decide whether $H$ has a directed Hamiltonian cycle, i.e., a directed cycle that goes through all the vertices exactly once.
		\end{listing}}
	\end{center}

\begin{proof}
Let $H$ be a digraph, which is an instance of the
directed Hamiltonian cycle problem.
We assume that $|V(H)|\geq 3$; otherwise the problem is trivial.
For each vertex $v \in V(H)$, we define a $6$-vertex graph $D_v$ (see the
top right in \figurename~\ref{fig:Reduction_bipartite}):
\begin{align*}
V(D_v) &= \{v^+, v^-, v_1, v_2, v_3, v_4, v_5, v_6\},\\
E(D_v) &= \{v^+v_1, v_1v_2, v_2v_3, v_3v_4, v_4v_5, v_5v_2, v_5v_6, v_6v^-\}.
\end{align*}
The cycle of length four formed by $v_2, v_3, v_4, v_5$ is denoted by $C_v$.

\begin{figure}[t]
\centering
\includegraphics{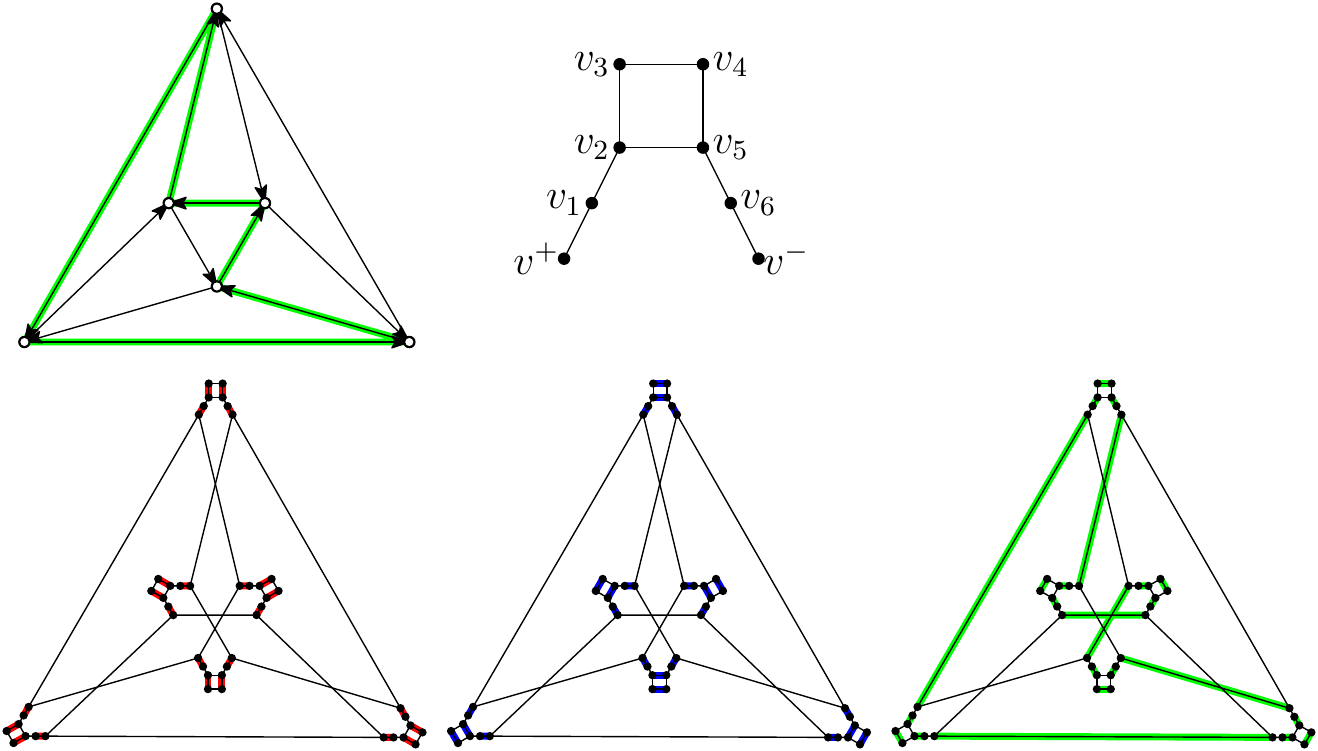}
\caption{Reduction for bipartite graphs of maximum degree three. Top left: a yes instance $H$ of \diham with a green directed Hamiltonian cycle. Top middle: the constructed fragment $D_v$.
Bottom left: The initial perfect matching $M$ (red).
Bottom middle: The target perfect matching $N$ (blue).
Bottom right: The perfect matching obtained as $M \symmdiff C$ where
$C$ corresponds to the directed Hamiltonian cycle of $H$.
}
\label{fig:Reduction_bipartite}
\end{figure}

We construct an instance $I=(G, M, N)$ of our problem
as follows.
The vertex set and the edge set of $G$ are defined as
\[
V(G) = \bigcup_{v \in V(H)} V(D_v),
\qquad
E(G) = \bigcup_{v \in V(H)} E(D_v) \cup \{ u^-v^+ \mid uv \in E(H) \},
\]
respectively.
Namely, for each directed edge from $u$ to $v$ in $H$,
we add an undirected edge to $G$ between $u^-$ and $v^+$.
This finishes the construction of $G$.
Note that $G$ is bipartite and its maximum degree is at most three as
both the maximum in-degree and the maximum out-degree of $H$ are at most two.
Let $M$ and $N$ be defined as
\begin{align*}
M &= \bigcup_{v \in V(H)} \{ v^+v_1, v_2v_3, v_4v_5, v_6v^-\},\\
N &= \bigcup_{v \in V(H)} \{ v^+v_1, v_2v_5, v_3v_4, v_6v^-\}.
\end{align*}
Refer to \figurename~\ref{fig:Reduction_bipartite} for the illustration.
Let $t^\ast$ be the length of a shortest reconfiguration sequence for $I$.

\begin{claim}\label{clm:hardness1-bip}
It holds that $t^\ast \geq 2$.
\end{claim}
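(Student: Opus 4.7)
The plan is to mimic the argument used for Claim \ref{clm:hardness1} in the planar reduction: show that $M \symmdiff N$ already decomposes into more than one cycle, so no single alternating cycle can transform $M$ into $N$.

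First I would inspect $M \symmdiff N$ on each gadget $D_v$. By construction, $M$ and $N$ agree on the ``attachment'' edges $v^+v_1$ and $v_6v^-$, and they agree outside the $D_v$'s as well (the inter-gadget edges $u^-v^+$ belong to neither $M$ nor $N$). On the inner $4$-cycle $C_v$ formed by $v_2, v_3, v_4, v_5$, we have $M \cap C_v = \{v_2v_3, v_4v_5\}$ and $N \cap C_v = \{v_2v_5, v_3v_4\}$, so $M \symmdiff N$ restricted to $D_v$ is precisely the $4$-cycle $C_v$.

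Second, I would assemble this local picture globally: the cycles $C_v$ for different $v \in V(H)$ are pairwise vertex-disjoint, since the $D_v$'s share no vertices and interact only through the inter-gadget edges, which lie outside $M \symmdiff N$. Hence $M \symmdiff N$ is the disjoint union of exactly $|V(H)|$ four-cycles. Because we assumed $|V(H)| \geq 3$, the symmetric difference consists of at least three disjoint cycles, and in particular is not a single alternating cycle.

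Finally, I would conclude by contraposition: if $t^\ast = 1$, then there would exist a single $M$-alternating cycle $C$ with $N = M \symmdiff C$, forcing $M \symmdiff N = C$ to be a single cycle, contradicting the preceding paragraph. Therefore $t^\ast \geq 2$. There is no real obstacle here; the only thing to be careful about is to check that none of the ``linking'' edges $v^+v_1$, $v_6v^-$, or the inter-gadget edges $u^-v^+$ contribute to $M \symmdiff N$, which is immediate from the definitions of $M$ and $N$.
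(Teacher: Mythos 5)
Your proof is correct and follows the same approach as the paper: the paper simply asserts that $M \symmdiff N$ is not a single $M$-alternating cycle, and you fill in the routine verification that $M \symmdiff N$ is the disjoint union of the $|V(H)| \ge 3$ four-cycles $C_v$, which rules out $t^\ast = 1$.
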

\begin{proof}
If $t^\ast=1$, then $M \symmdiff N$ must consist of one $M$-alternating cycle, but this is not the case for our instance $I$.
Thus, the length of a reconfiguration sequence is at least two.
\end{proof}

\begin{claim}\label{clm:hardness2-bip}
	If $H$ has a directed Hamiltonian cycle $C$, then it holds that $t^\ast= 2$.
\end{claim}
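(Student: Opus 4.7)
The plan is to mirror the strategy of Claim~\ref{clm:hardness2}: given the directed Hamiltonian cycle $C$ of $H$, I will explicitly construct a reconfiguration sequence $\langle M, M', N\rangle$ of length two. Combined with Claim~\ref{clm:hardness1-bip}, this will immediately give $t^\ast=2$.

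First I would observe that inside each gadget $D_v$ the \emph{spine}
$v^+, v_1, v_2, v_3, v_4, v_5, v_6, v^-$ is an $M$-alternating path whose first and last edges ($v^+v_1$ and $v_6v^-$) lie in $M$. For every directed edge $uv\in C$ the inter-gadget edge $u^-v^+$ of $G$ is not in $M$. I would then form $C'$ by concatenating, in the order dictated by $C$, the spine of each $D_v$ with the inter-gadget edge $v^-w^+$ leaving $v$ along $C$. Alternation at each vertex $v^+$ (non-$M$ edge in from outside, $M$-edge $v^+v_1$ into the gadget) and $v^-$ (in-$M$ edge $v_6v^-$, non-$M$ edge out) is immediate, and since $C$ visits every vertex of $H$ in a single closed walk, $C'$ is a single $M$-alternating cycle.

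Next, I would set $M':=M\symmdiff C'$. A direct check shows that on each gadget $M'\cap E(D_v)=\{v_1v_2, v_3v_4, v_5v_6\}$, and the inter-gadget edges of $C$ are in $M'$, so $M'$ is a perfect matching. The heart of the plan is to verify that $M'\symmdiff N$ is a single $M'$-alternating cycle, so that one more step reaches $N$. On each gadget, the contribution is the five-edge trail $v^+,v_1,v_2,v_5,v_6,v^-$ (the edge $v_3v_4$ lies in $M'\cap N$ and drops out); these trails are glued by the inter-gadget edges $v^-w^+$ used in $C'$ (which belong to $M'\setminus N$). The alternation pattern around each block is $(\notin M',\in M',\notin M',\in M',\notin M',\in M')$, so the combined walk is $M'$-alternating.

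The main obstacle, and the only place where the hypothesis is really used, is showing that this walk is one cycle rather than a disjoint union of alternating cycles. If $C$ were only a vertex-disjoint union of directed cycles in $H$, the inter-gadget links would close up one alternating cycle per cycle of $C$, plus an extra alternating cycle inside each unvisited $D_v$ (namely $C_v$), and the second move would not be a single alternating cycle. The Hamiltonicity of $C$ is precisely what forces the gluing of all gadget-trails into one cycle, thereby exhibiting a reconfiguration sequence of length two and completing the argument.
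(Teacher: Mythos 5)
Your construction is correct and matches the paper's own argument: the first alternating cycle $C'$ is the concatenation of the gadget spines along the Hamiltonian cycle, giving $M'$ with $M'\cap E(D_v)=\{v_1v_2,v_3v_4,v_5v_6\}$, and your second cycle $M'\symmdiff N$ (the trails $v^+,v_1,v_2,v_5,v_6,v^-$ glued by the inter-gadget edges) is exactly the cycle $C''$ the paper uses, so together with Claim~\ref{clm:hardness1-bip} this yields $t^\ast=2$.
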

\begin{proof}
	We see that $G$ has an $M$-alternating cycle $C'$, corresponding to $C$ of $H$, that has four edges $v^+ v_1, v_2 v_3, v_4 v_5, v_6 v^-$ of $D_v$ for each $v\in V(C)$.
	Then $M^\prime=M\symmdiff C'$ is a perfect matching.
	In a similar way, $G$ has an $M^\prime$-alternating cycle $C''$, corresponding to $C$, that uses three edges $v^+ v_1$, $v_2 v_5$, and $v_6 v^-$ of $C_v$ for each $v\in V(C)$.
	Then $M^\prime\symmdiff C''$ is equal to $N$.
	Thus we can find a reconfiguration sequence of length two, which is the shortest by Claim~\ref{clm:hardness1-bip}.
\end{proof}

\begin{claim}\label{clm:hardness3-bip}
	If $t^\ast = 2$, then $H$ has a directed Hamiltonian cycle.
\end{claim}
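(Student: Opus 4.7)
The plan is to mirror the argument of Claim~\ref{clm:hardness3}. Write the reconfiguration sequence as $\langle M, M', N\rangle$ and set $C = M \symmdiff M'$. As a preliminary step I would rule out $C = C_v$ for some $v \in V(H)$: in that case a direct computation shows $M' \cap E(D_v) = N \cap E(D_v)$ while $M'$ still agrees with $M$ on every other gadget, so $M' \symmdiff N$ decomposes into the $|V(H)|-1 \geq 2$ disjoint four-cycles $C_u$ with $u \neq v$, contradicting Claim~\ref{clm:hardness1-bip} (which forces $M' \symmdiff N$ to be a single $M'$-alternating cycle).

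Next I would establish a structural lemma: any $M$-alternating cycle in $G$ other than some $C_v$ must, upon entering a gadget $D_v$ through $v^+$, traverse the full backbone $v^+ v_1 v_2 v_3 v_4 v_5 v_6 v^-$ and leave through $v^-$. Starting from $u^- v^+ \in C \setminus M$, the alternation forces $v^+ v_1 \in M$, then $v_1 v_2 \notin M$, then $v_2 v_3 \in M$, then $v_3 v_4 \notin M$, then $v_4 v_5 \in M$; the only apparent branch at $v_5$ between $v_5 v_2$ and $v_5 v_6$ is decided by the simple-cycle condition, because $v_5 v_2$ would revisit $v_2$; the walk then continues $v_5 v_6$, $v_6 v^-$, and exits via some $v^- w^+ \notin M$. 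Consequently the entry/exit pair at every visited gadget is $(v^+, v^-)$, and the external edges used are one entry edge $u^- v^+$ and one exit edge $v^- w^+$, which correspond to directed edges $uv$ and $vw$ of $H$.

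Define $F = \{uv \in E(H) \mid u^- v^+ \in C\}$ and let $W_C$ be the set of endpoints of $F$. By the structural lemma, every $v \in W_C$ has exactly one in-neighbour and one out-neighbour in $F$, and since $C$ is a single cycle in $G$, $F$ is a single directed cycle of $H$ on $W_C$. It remains to show $W_C = V(H)$. For $v \notin W_C$, $C$ avoids $D_v$, so $M' \cap E(D_v) = M \cap E(D_v)$ and the four-cycle $C_v$ appears as its own connected component of $M' \symmdiff N$. For $v \in W_C$, applying $M' = M \symmdiff C$ edge by edge inside $D_v$ yields $M' \cap E(D_v) = \{v_1 v_2, v_3 v_4, v_5 v_6\}$, which together with the external edges $u^- v^+, v^- w^+ \in M'$ makes $M' \symmdiff N$, restricted to $D_v$ and its incident external edges, a single $u^-$--to--$w^+$ path through $D_v$.

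Concatenating these paths along consecutive vertices of the directed cycle $F$ yields exactly one connected component of $M' \symmdiff N$ on the gadgets indexed by $W_C$. Hence $M' \symmdiff N$ contains $1 + |V(H) \setminus W_C|$ pairwise disjoint cycles; since by Claim~\ref{clm:hardness1-bip} it must consist of exactly one cycle, $W_C = V(H)$, and so $F$ is a directed Hamiltonian cycle of $H$. The main technical hurdle is the structural lemma: one must confirm that $C$ cannot shortcut through the chord $v_2 v_5$ or exit a gadget through any vertex other than $v^+$ or $v^-$, which is precisely what the gadget $D_v$ has been designed to prevent; after that, the component count is a routine bookkeeping step.
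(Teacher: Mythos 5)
Your proposal is correct and follows essentially the same route as the paper: rule out $C = C_v$, use the forced backbone traversal of each visited gadget $D_v$ to define $F$ from the connector edges of $C$, and count the disjoint components of $M' \symmdiff N$ (one four-cycle $C_v$ per untouched gadget plus one cycle through the touched gadgets) to force $W_C = V(H)$ via the fact that the second move must flip a single $M'$-alternating cycle. Your write-up is in fact somewhat more explicit than the paper's (the structural lemma and the exclusion of $C = C_v$ are only implicit there), but the argument is the same.
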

\begin{proof}
	Let $\langle M, M', N \rangle$ be a shortest reconfiguration sequence of $I$.
  Let $C =M \symmdiff M^\prime$.
	We may assume that $C$ is not $C_v$ for any $v\in H$, as $t^\ast = 2$.
	We will prove that the edge subset $F=\{uv \in E(H)\mid u^- v^+\in C\}$ forms a Hamiltonian cycle in $H$.
	We denote $W_C$ by the set of vertices in $H$ used in $F$.
	Let $\overline{W_C}=V(H)\setminus W_C$.
	Since $M^\prime\cap C_v$ and $N\cap C_v$ are distinct for $v\in \overline{W_C}$, the symmetric difference $M^\prime\symmdiff N$ has at least $|\overline{W_C}|$ disjoint $M^\prime$-alternating cycles.
	Moreover, for a vertex $v\in W_C$, we see that $M^\prime\cap C_v=\{v_3 v_4\}$ and $N\cap C_v=\{v_3 v_4, v_2 v_5\}$, that are distinct.
	Hence $M^\prime\symmdiff N$ has at least one $M^\prime$-alternating cycle disjoint from $\bigcup_{v\in \overline{W_C}} V(D_v)$.
	Therefore, we have at least $|\overline{W_C}|+1$ disjoint $M^\prime$-alternating cycles.
	However, $M^\prime\symmdiff N$ must consist of one cycle~(see Claim~\ref{clm:hardness1-bip}), implying that $\overline{W_C}=\emptyset$.
	This means that $C$ goes through $C_v$ for every $v$, and hence $C'$ is a Hamiltonian cycle in $H$.
	Thus the claim holds.
\end{proof}
	Therefore, it follows that $H$ has a directed Hamiltonian cycle if and only if $t^\ast = 2$.
	This completes the proof.
\end{proof}

Note that the reduction does not produce a planar graph
even when the input digraph has a planar underlying graph.
The example in \figurename~\ref{fig:Reduction_bipartite} contains
a $K_5$-minor.

The proofs actually show that \prob is \NP-hard to approximate within
a factor of less than $3/2$.


\section{Conclusion}
In this paper,
we studied the shortest reconfiguration problem of perfect matchings 
under the alternating cycle model, which is equivalent to
the combinatorial shortest path problem on perfect matching polytopes.
We prove that the problem can be solved in polynomial time for
outerplanar graphs, but it is {\NP}-hard, and even {\APX}-hard
for planar graphs and bipartite graphs.

Several questions remain unsolved.
For polynomial-time solvability, our algorithm runs only for
outerplanar graphs, and it looks difficult to extend the algorithm
to other graph classes.
A next step would be to try $k$-outerplanar graphs for fixed $k\geq 2$.

One way to tackle \NP-hard cases is approximation.
We only know the \NP-hardness of $3/2$-approximation.
We believe the existence of a polynomial-time constant-factor approximation.
Note that 
we do not obtain a constant-factor approximation
by flipping alternating cycles in the symmetric difference of 
two given perfect matchings one by one.

This paper was mainly concerned with reconfiguration of perfect matchings.
Alternatively, we may consider reconfiguration of maximum matchings, or
maximum-weight matchings.
In those cases, we need to adopt the alternating path/cycle model.
Then, the question is related to the combinatorial shortest path problem
on faces of matching polytopes.
Note that the perfect matching polytope is also a face of the matching 
polytope.
Therefore, the study on maximum-weight matchings will be a generalization
of this paper.

To the best of the authors' knowledge, the combinatorial shortest path
problem of $0/1$-polytopes has not been well investigated while
the adjacency in $0/1$-polytopes has been extensively studied in the
literature.
This paper opens up a new perspective for the study of combinatorial
and computational aspects of polytopes, and connects them with the
study of combinatorial reconfiguration.



\providecommand{\noopsort}[1]{}


\end{document}